\newtheorem{theorem}{Theorem}[section]
\newtheorem{claim}[theorem]{Claim}  
\newtheorem{lem}[theorem]{Lemma}
\newtheorem{fact}[theorem]{Fact}
 \newcommand{\qedsymb}{\hfill{\rule{2mm}{2mm}}}  
 \newenvironment{proof}[1][]{\begin{trivlist}  
 \item[\hspace{\labelsep}{\bf\noindent Proof#1:\/}] 
 }{\qedsymb\end{trivlist}}
\newcommand{\ignore}[1]{}
\newcommand{\norm}[1]{{\| #1 \|}}  
\newcommand{\ket}[1]{{ |{#1} \rangle }}  
\newcommand{\bra}[1]{{ \langle {#1} | }}
\newcommand{\braket}[2]{{ \langle {#1} | {#2} \rangle}}
\newcommand{\ketbra}[2]{{ |{#1} \rangle\langle {#2} | }}
\newcommand{\orderof}[1]{\mathcal{O}(#1)} 
\newcommand{\EqDef}{\stackrel{\mathrm{def}}{=}}
\newcommand{\Tr}{\mathrm{Tr}}
\newcommand{\Hi}{{\mathcal{H}}}
\newcommand{\gs}{\Pi_{gs}}
\newcommand{\Eq}[1]{Eq.~(\ref{#1})}
\newcommand{\Fig}[1]{Fig.~\ref{#1}}
\newcommand{\Lem}[1]{Lemma~\ref{#1}}
\newcommand{\Sec}[1]{Sec.~\ref{#1}}
\newcommand{\Ref}[1]{Ref.~\cite{#1}}
\newcommand{\App}[1]{Appendix~\ref{#1}}
\newcommand{\Id}{\mathbbm{1}}
\newcommand{\BBC}{\mathbbm{C}}
\begin{document}

\title{Quantum Hamiltonian complexity and the detectability lemma}

\author{Dorit Aharonov\thanks{School of Computer Science and 
  Engineering, The Hebrew University, 
    Jerusalem, Israel}
 \and Itai Arad\thanks{School of Computer Science and 
    Engineering, The Hebrew University, Jerusalem, Israel.} 
 \and Zeph Landau\thanks{Department of Computer Science, UC Berkeley} 
 \and Umesh Vazirani\thanks{Department of Computer Science, UC Berkeley} 
}

\maketitle

\begin{abstract}
  Local Hamiltonians, the central object of study in condensed
  matter physics, are the quantum analogue of CSPs, and ground
  states of Hamiltonians are the quantum analogue of satisfying
  assignments. The major difference between the two is the existence
  of multi-particle entanglement in the ground state, which
  introduces a whole new level of difficulty in tackling questions
  such as quantum PCP, quantum analogues of amplification, etc.

  The Lieb-Robinson bound is a sophisticated analytic tool used in
  condensed matter physics for handling quantum correlations in
  ground states, by bounding the velocity at which disturbances
  propagate in quantum local systems. In this paper we show that the
  detectability lemma (introduced in a different context in
  \Ref{ref:Aha09b}), when viewed from the right perspective, can be
  used in place of the Lieb-Robinson bound for the rich case of
  frustration free Hamiltonians. The advantage of this is that the
  resulting proofs are simpler and more combinatorial, and may be
  generalizable to solve some of the most fundamental questions in
  Hamiltonian complexity. Additionally, we give an alternative proof
  of the detectability lemma, which is not only simple and intuitive,
  but also removes a key restriction in the original statement,
  making it more suitable for this new context.

  Specifically, we use the detectability lemma to give a simpler
  proof of Hastings' seminal 1D area law \cite{ref:Has07} for
  frustration-free systems. Proving the area law for two and higher
  dimensions is one of the most important open questions in
  Hamiltonian complexity, and the combinatorial nature of the
  detectability lemma based proof and the resulting simplification
  holds out hope for a possible generalization. We also provide a
  one page proof of Hastings' proof that the correlations in the
  ground states of gapped Hamiltonians decay exponentially with the
  distance (once again, restricted to frustration-free systems). We
  argue that the detectability lemma in this form constitutes a
  basic tool for the study of local Hamiltonians and their ground
  states from a computational point of view.

\end{abstract}  

\newpage
%%%%%%%%%%%%%%%%%%%%%%%%%%%%%%%%%%%%%%%%%%%%%%%%%%%%%%%%%%%%%%%%%%%%%%%%
\section{Introduction}

Local Hamiltonians and ground states, the central object of study of
condensed matter physics, are the quantum analogues of the central
objects of study in computational complexity: constraint
satisfaction problems (CSP) and their satisfying assignments.  This
connection, which ties together two seemingly very different areas,
is the starting point for the emergence of the new field, Quantum
Hamiltonian Complexity, in which properties of local Hamiltonians
and ground states are being studied from a computational complexity
point of view. Over the past few years, this direction has shed
exciting new insights into quantum information theory as well as
into quantum physics.  Of crucial importance here is the difference
between the quantum and classical domains: the quantum analogue of
the satisfying assignment, namely the ground state, can exhibit
extremely intricate multi-particle entanglement. This additional
player in the game makes borrowing results from the classical domain
to the quantum domain extremely challenging, cf. the wide open major
open problem of whether a quantum analogue of PCP holds
\cite{ref:Aha09b}; it also opens up completely new directions of
research regarding the entanglement properties of ground states of
local Hamiltonians.

General quantum states require $2^n$ complex numbers to describe. 
One of the major goals of quantum Hamiltonian complexity is to 
derive bounds on the entanglement exhibited in ground states of 
interesting classes of local Hamiltonians; the purpose of those
bounds and restrictions on the entanglement is to lead to an
efficient description and analysis of ground states in cases of
interest.  There is a beautiful sequence of papers using structures
called tensor networks, with special cases such as MPS
\cite{ref:Ost95, ref:Ost97, ref:Vid04a, ref:Ver04b}, PEPS
\cite{ref:Ver04a}, TN \cite{ref:Shi06}, and MERA \cite{ref:Vid07a},
which provide such efficient descriptions in certain cases. 

Area laws constitute one of the most important tools for bounding 
entanglement in such systems.  Consider the interaction graph
(hypergraph) associated with a local Hamiltonian -- it has a vertex
for each particle and an edge for each term of the Hamiltonian.
Intuitively and very roughly, an area law says that entanglement is
local in this interaction graph in the following sense: consider a
subset of particles $L$. Then the entanglement between $L$ and
$\bar{L}$ in the ground state is locally ``concentrated'' along the
edges between $L$ and $\bar{L}$; more precisely, the area law states
that the entanglement entropy across the cut is big-Oh of the number
of edges crossing between $L$ and $\bar{L}$.  This is clearly a very
strong restriction on the entropy, which in the general case would
be of order of the number of particles (nodes) in $L$.  Proving area
laws for typical classes of Hamiltonians is thus a holy grail in
quantum Hamiltonian complexity. 

A few years ago, in a seminal paper~\cite{ref:Has07}, Hastings
proved that the area law holds for 1D systems (i.e., when the
interaction graph is a path), for gapped Hamiltonian -- that is,
Hamiltonians whose overall spectral gap is of order $\orderof{1}$. 
In this case, the area law says that ground state entanglement
across any contiguous cut is bounded by a constant. From this, one
can deduce that the ground state of such systems can be described
efficiently (by an MPS of polynomial bond dimension -- see
\Ref{ref:Has07}).  The question of whether ground states in two
and higher dimensions obey an area law is still wide open.

Hastings' proof of the 1D area law, and many other proofs related to
entanglement and correlations in ground states, use sophisticated
analytic methods.  Perhaps the most important of those is the famous
Lieb-Robinson bound (LR bound) \cite{ref:Lie72, ref:Has04}, which
bounds the velocity at which disturbances propagate in quantum local
systems; Fourier analysis, and other techniques are important players
too.  These analytic tools constitute a major barrier for a fuller
participation by computer scientists in this important aspect of
Hamiltonian complexity.  Also, these analytic techniques seem to
inherently involve the dynamics of the system in time, according to
the Hamiltonian. However, purely from an aesthetics point of view,
it should be possible to explain kinematic results about the ground
state without resorting to dynamical arguments (which is what the LR
bound is). Or, in other words, without adding the extra dimension of
time to the problem.  In addition, the kinematic problem seems, at
least on the surface, to be of a combinatorial nature, thereby
suggesting a combinatorial solution. 

In this paper we introduce a combinatorial tool to tackle the above
mentioned problems, and, in particular, to get a handle on 
correlations and entanglement in ground states of local
Hamiltonians. This is a simple, basic version of the detectability
lemma of \Ref{ref:Aha09b}. We demonstrate that when the system is
frustration-free, many of the results that rely on the traditional
analytic tools can be obtained in a much simpler, direct and
intuitive way using this tool; we argue that the detectability lemma
in this form constitutes a basic tool for the study of local
Hamiltonians and their ground states from a computational point of
view.  

Our starting point is the Detectability Lemma (DL) introduced in
\Ref{ref:Aha09b}.  There, the motivation for the DL was quite
specific: to help translate classical results about CSPs to quantum
results about local Hamiltonians. It was used to prove a quantum
analog of gap amplification (a component of Dinur's proof of the PCP
theorem \cite{ref:Din07}). The DL made it possible to sensibly make
a statement of the form ``If the ground state energy is at least $k$
then the probability that it violates at least $ck$ terms of the
Hamiltonian is bounded below by a constant''. The DL of
\Ref{ref:Aha09b} holds under the mild assumption (which is
essentially true in most interesting cases) that each particle
participates in a bounded number of terms of the Hamiltonian, and
therefore the terms of the Hamiltonian can be partitioned into a
constant number of layers, each consisting of terms acting on
disjoint sets of particles. \Ref{ref:Aha09b} also required an
additional technical assumption, that the number of distinct types
of terms of the Hamiltonian are bounded. 

Here, we reformulate the DL and put it in a much broader and basic 
context. Our reformulation of the DL asks the following question:
consider a gapped frustration-free local Hamiltonian $H =
\sum_{i=1}^m H_i$ with $0 \leq H_i \leq 1$. i.e., the ground energy
of $H$ is $0$, and the spectral gap is $\epsilon = \orderof{1}$. The
frustration-free assumption means that the ground state minimizes
the energy of \emph{every} local term, so no term is ``frustrated''.
Can we approximate the projection $\gs $ on the ground state,
$\ket{\Omega}$, by a ``local'' operator? Such a local approximation
would be extremely useful, as it would enable deducing local
properties of the ground states such as area laws and decay of
correlations.  Indeed, such an approximation of a projection on the
ground state is essentially what is done by the traditional analytic
tools that use the LR bound, as we explain in \Sec{sec:LR}. The
approximation offered by the DL, however, has more of a
combinatorial flavor, and is therefore much easier to handle.

A natural first guess of such a local approximation of $\gs $ is the
positive semi-definite operator $G\EqDef(\Id - \frac{1}{m}H)$, where
$m$ is the number of terms in the Hamiltonian. $G$ fixes the ground
state, and shrinks all the orthogonal space to it by a factor; 
however the shrinkage is very limited, by a factor of
$(1-\epsilon/m)$. To get a good approximation, one would need to
apply this operator polynomially many times, and by this we would
lose the locality of the operator. Indeed, the expression
$(\Id-H/m)^m$ contains products of $m$ overlapping terms whose
overall support is of the order the size of the system. Our
challenge is therefore to get a local operator that preserves the
ground state but shrinks the orthogonal subspace by a constant
factor, rather than by $\epsilon/m$. For simplicity of presentation
in the introduction, let us consider the simplest scenario, in which
the particles are set on a 1D chain, and the interactions are two
local. Denote by $P_i$ the projection on the ground state of the
terms $H_i$.  Notice that the terms in the Hamiltonian can be
partitioned into two layers, the even and odd terms, each acting on
disjoint sets of particles (see \Fig{fig:2layers}); 

\begin{figure}
  \begin{center}
    \includegraphics[scale=0.8]{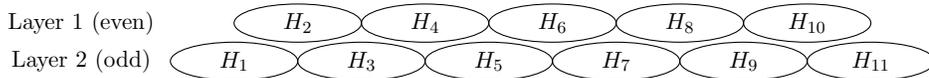}
  \end{center}
  \caption{ Am illustration of a 1D system of two-local, nearest
  neighbors, interactions. The local terms ($H_1, H_2, \ldots$) can
  be arranged in two layers (even and odd), such that the terms in
  each layer do not overlap. \label{fig:2layers}}
\end{figure}

Denote by $\Pi_{odd}$ the product of the projections on the ground
spaces of all odd terms $P_1,P_3,...$ and by $\Pi_{even}$ the
product for the even terms.  Then the operator
$A\EqDef\Pi_{odd}\Pi_{even}$ is the ``local'' operator we want.  The
DL states: 

\begin{lem}[ Detectability Lemma (DL) in $1D$]
\label{lem:detect2}
  Let $A\EqDef\Pi_{odd}\Pi_{even}$, and let $\Hi'$ be the orthogonal
  complement of the ground space. Then
  \begin{align}
  \label{eq:shrink}
      \norm{A|_{\Hi '}}
        \leq \frac{1}{(\epsilon/2 + 1)^{1/3}} \ .
  \end{align}
\end{lem}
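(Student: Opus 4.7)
The plan is to reduce the operator-norm bound to a Rayleigh-quotient lower bound on an intermediate subspace, then exploit the gap on that subspace together with the combinatorial structure of the frustration-free 1D system.

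Since the ground space lies in $\mathrm{range}(P_i)$ for every $i$, one has $\gs\le\Pi_{even}$ and $\gs\le\Pi_{odd}$, so both layer projectors commute with $\gs$ and hence preserve $\Hi'$. Writing $A^{*}A=\Pi_{even}\Pi_{odd}\Pi_{even}$, for a unit $\psi\in\Hi'$ the intermediate vector $\phi := \Pi_{even}\psi$ lies in $V:=\Hi'\cap\mathrm{range}(\Pi_{even})$ and
\[
\|A\psi\|^{2}=\|\phi\|^{2}-\|(I-\Pi_{odd})\phi\|^{2}.
\]
Thus it suffices to prove a uniform lower bound on the Rayleigh quotient $\|(I-\Pi_{odd})\phi\|^{2}/\|\phi\|^{2}$ over $\phi\in V\setminus\{0\}$.

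The gap enters cleanly on $V$: every even $P_{i}$ fixes each $\phi\in V$; since $0\le H_{i}\le I$ and $P_{i}H_{i}=0$, we have $H_{i}=(I-P_{i})H_{i}(I-P_{i})$, so $H_{i}\phi=0$ and hence $H_{even}\phi=0$. Combined with $\langle\phi|H|\phi\rangle\ge\epsilon\|\phi\|^{2}$ (which holds because $V\subseteq\Hi'$) this gives
\[
\langle\phi|H_{odd}|\phi\rangle\ge\epsilon\|\phi\|^{2}\quad\text{for every }\phi\in V.
\]

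The main obstacle is the final step: converting this energy lower bound into the desired projector bound. The naive inequality $H_{odd}\le\|H_{odd}\|_{op}(I-\Pi_{odd})$ is useless because $\|H_{odd}\|_{op}$ is extensive --- a joint eigenstate of the disjoint odd terms that violates $k$ of them pays energy $k$ but contributes only $1$ to $\langle I-\Pi_{odd}\rangle$. The combinatorial fix must use both the disjoint-support structure of the odd layer and the fact that $\phi$ already sits in the range of every even $P_{i}$: in 1D the even projectors pin the sites shared by neighbouring odd bonds, so a state in $V$ cannot simultaneously violate many odd bonds. Formalising this should yield an operator inequality of the shape $H_{odd}|_{V}\le c(I-\Pi_{odd})|_{V}$ with an absolute constant $c$; taking $c=2$, for instance, yields $\|A\psi\|^{2}\le\|\phi\|^{2}(1-\epsilon/2)\le 1-\epsilon/2$, and the elementary fact $1-\epsilon/2\le (1+\epsilon/2)^{-2/3}$ (valid on $[0,2]$ since $(1-\epsilon/2)^{3}(1+\epsilon/2)^{2}\le 1$) finishes the argument after a square root. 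I expect this restricted operator inequality to be the technical heart of the proof; everything before it is routine operator algebra.
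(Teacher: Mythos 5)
Your reduction to the Rayleigh quotient on $V=\Hi'\cap\mathrm{range}(\Pi_{even})$ is correct and in fact lossless, and the energy bound $\bra{\phi}H_{odd}\ket{\phi}\ge\epsilon\norm{\phi}^2$ on $V$ is right. But the step you defer to as the ``technical heart'' --- an operator inequality $H_{odd}|_V\le c\,(\Id-\Pi_{odd})|_V$ with an absolute constant $c$ --- is false, and no combinatorial fix can rescue it. Take all even terms trivial, $Q_i=0$ for even $i$, so that $\Pi_{even}=\Id$ and $V=\Hi'$ (a legitimate gapped frustration-free instance, with $\epsilon=1$). Let $\ket{\phi}=\bigotimes_i\ket{e_i}$ with each $\ket{e_i}$ a state of the $i$-th odd pair lying in the range of $Q_{2i-1}$; this state is orthogonal to the ground space, has $\bra{\phi}H_{odd}\ket{\phi}=m$ (with $m$ the number of odd bonds), yet $\bra{\phi}(\Id-\Pi_{odd})\ket{\phi}=1$. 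So the best constant in your inequality is at least $m$, not $\orderof{1}$; your intuition that the even projectors pin the shared sites so that a state in $V$ cannot violate many odd bonds fails when the even constraints are weak, and the lemma must cover that case. (The DL itself survives this example because $\Pi_{odd}\phi=0$.)

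The underlying issue is that any true energy-versus-shrinkage relation here is necessarily nonlinear: a state in $V$ can carry energy as large as $\Theta(m)$ only at the price of being shrunk essentially to zero by $\Pi_{odd}$, whereas a linear inequality with constant $c$ would cap the energy at $c$. The paper's proof captures exactly this nonlinearity. It bounds the energy of the \emph{final} vector $A\ket{\psi}$ (for which the odd terms contribute zero energy, since $Q_i\Pi_{odd}=0$ for odd $i$), regroups $A$ into disjointly supported ``pyramids'' $\Delta_i=P_{4i-3}P_{4i-1}P_{4i-2}$ times a remainder, and uses the norm--energy trade-off (\Lem{lem:normenergy}) to bound the energy of each even term by $1-a_i$, where $a_i$ is the norm shrinkage caused by the $i$-th pyramid. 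Because the pyramids are applied sequentially, the shrinkages \emph{multiply}, $\norm{A\psi}^2\le\prod_i a_i$, and optimizing $\sum_i(1-a_i)$ under this multiplicative constraint yields the bound $2(1-\norm{A\psi}^2)/\norm{A\psi}$ --- which correctly diverges as the shrinkage becomes total. That multiplicative accumulation over disjoint pyramids is the idea your proposal is missing, and without it (or some substitute for it) the argument does not close.
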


The DL says that the application of $A$ to any vector moves the
vector closer to the ground state of $H$ by cutting down the mass in
the orthogonal subspace by a constant factor.  This implies that
$\gs $, the projection into the ground space of $H$, can be
approximated to within exponentially good precision by applying the
operator $A$ $\ell$ times: $\gs  = A^{\ell} + e^{-\orderof{\ell}}$.  

Let us explain why this operator is indeed
``local''. When $A$ is applied $\ell$ times to some local
perturbation $B$ that acts on the ground state $\ket{\Omega}$, there
is a pyramid-shaped ``causality cone'' of projections that is
defined by $B$.  These are simply all terms which are
graph-connected to the operator $B$ (see \Fig{fig:light-cone}).  All
the projections outside that cone commute with $B$ and can therefore
be absorbed in the ground state (since $P_i\ket{\Omega} =
\ket{\Omega}$), leaving us with a \emph{local} operator of support
of size $\orderof{\ell}$. Effectively, $A^\ell$ acts non-trivially
only on a region of width $\orderof{\ell}$, when applied to
$B\ket{\Omega}$.

\begin{figure}
  \begin{center}
    \includegraphics[scale=1]{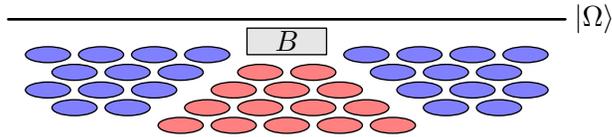}
  \end{center}
  \caption{An illustration of the expression $A^\ell B\ket{\Omega}$
  in a 1D system. $B$ is a local perturbation, applied to the ground
  state $\ket{\Omega}$. The local terms underneath it correspond to
  the $P_i$ projections in $A^\ell$. The pink terms are the
  projections inside the causality cone of $B$. These terms are
  graph connected to $B$, and generally do not commute with it. The
  blue terms are outside the causality cone, and can therefore
  commute with $B$ and be absorbed by $\ket{\Omega}$.
  \label{fig:light-cone}}
\end{figure}

We give here a new simple proof of this reformulation of the DL, in
the process dropping the assumption of \Ref{ref:Aha09b} about the
number of distinct types of terms of the Hamiltonian. 

The proof hinges on the following observation, which we refer to as
the \emph{norm-energy trade off}.  Assume by contradiction that $A$ 
does not move a vector $\ket{\psi}$, which is orthogonal to the
ground state, very much. Then $A\ket{\psi}$ must be very close to
the range of each $P_i$, but since the range of $P_i$ is the null
space of the local term $H_i$, this means that the energy
$\bra{\psi}A^\dagger H_i A\ket{\psi}$ must be small.  However, on
the other hand, the sum of those energies must be larger than
$\epsilon$, since $A\ket{\psi}$ is orthogonal to the ground space;
this implies that the shrinkage must be quite significant, providing
an upper bound on the norm of the vector $A\ket{\psi}$. 

However, the above argument is not sufficiently strong.  Since there
are $m$ terms $H_i$, the energy contribution of each term can be as
small as $\epsilon/m$; this will lead to a factor of $\epsilon/m$ in
the lemma, which is not strong enough.  The key point is that the
energy-norm trade off can be applied locally, using the tensorial
structure of $A$; we break the movement of $\ket{\psi}$ to
$A\ket{\psi}$ into disjoint sequential steps and then relate the
contributions to the energy of {\it each} of the terms $H_i$ with
the shrinkage resulting from each step; one might suspect that 
entanglement could prevent such an analysis in which shrinkage
accumulates but the point is exactly that the local structure of the
problem allows this accumulation to happen. Think very 
simplistically of the state $a\ket{00000}+b\ket{11111}$ subjected to
the local terms $H_i=\ket{1}\bra{1}_i$. A projection of this state 
on the ground state of $H_i$ for any one of qubits $i$, results in a
shrinkage by a factor of $|a|^2$, but once one projection is applied
in one location, the shrinkage is exhausted and no more shrinkage is
to be gained by a projection in another location.  That this
entanglement related phenomenon does not happen in the DL scenario
is due to the locality of the operators involved; it highlights that
the way the state $A\ket{\psi}$ can be entangled is severely
limited.  

We demonstrate the applicability of this reformulation of the DL by
providing significant simplifications of the proof of Hastings' area
law in 1D \cite{ref:Has07}, using the DL in two key points, 
bypassing completely the analytic methods. By this we hope to make
this important result accessible to a wider audience, as well as possibly
extendable to higher dimensions. The outline of the proof still
follows that of Hastings, but now becomes much easier to understand;
we defer the explanation of how the proof goes and how the
detectability lemma enters the picture to \Sec{sec:area}. 

To give another example, we provide a one page, very simple proof of Hastings'
celebrated result that the correlations in the ground states of gapped
Hamiltonians decay exponentially with the distance \cite{ref:Has04}.
Unlike the area law, this applies to $d$-dimensional grids for any constant 
$d$.  More precisely, consider two observables $A$ and $B$ that are local and
act on sets of particles that are of distance $\ell$ on the grid;
the decay of correlations means that the expectation value of their
product is almost as that of the product of their expectation, up to
an error which decays exponentially in $\ell$.

We mention that at first sight, one might connect the exponential decay
of correlations to an intuition that entanglement between a region
$L$ and its surrounding is ``located'' only close to the boundary of
$L$, and thus scales like the area rather than like the volume.
Though an appealing intuition, such an implication of exponential
decay of correlation to area laws is not known, and indeed quantum
expanders provide a counter-example to such a na\"ive connection
\cite{ref:Has07b}.

In both of those proofs, the DL replaces a combination of the
Lieb-Robinson bound with other analytic tools; this works of course
only when the DL is applicable, namely, for the rich case of
frustration-free Hamiltonians. The restriction to frustration-free
Hamiltonians may seem quite strong. We note, however, that there are
various frustration-free systems that are interesting from a physics
and a computational points of view, such as the ferromagnetic XXZ
model, the AKLT model \cite{ref:Aff87}, and stabilizer codes such as
the Toric code \cite{ref:Kit03}. In addition, many of the quantum
phenomenon in quantum Hamiltonian complexity are revealed already in
the context of frustration-free Hamiltonians, and the major open
problems in this area (e.g., quantum PCP and 2D area law) are wide
open already for this case. Much is to be learned from studying
frustration-free Hamiltonians, before we proceed to the more general
case; it seems that the simpler combinatorial nature of the DL in
this case might provide a new handle to those questions, and there
are reasons to believe that a proof of an area law for
frustration-free systems might be extendable to the general case. 

To illustrate how exactly the is DL related to the analytic methods,
we start our more technical discussion with a toy application
comparing the usage of the LR bound to the alternative route offered
by the DL, in \Sec{sec:LR}.  

{~}

\noindent\textbf{Related work and further directions:}\\ The DL
seems to be connected to various diverse scientific areas. The
connections to the LR bound and other analytic tools used in
condensed matter physics are discussed extensively in \Sec{sec:LR}; 
one other connection is to view of the DL operator $A$ as a special
instance of the general \emph{Method of Alternating Projections
(MAP)}, that was first studied by von Neumann \cite{ref:Neu50}. In
that method one applies a fixed sequence of projections in order to
approach the intersection subspace.  In the general setting, the
projections are not assumed to be local, nor the Hilbert space is
assumed to be of finite dimension. In recent results
\cite{ref:Bad10}, the convergence rate is given as a function of the
\emph{Fridriechs angle}, which is not easily related to a physical
quantity. The DL, on the other hand, is a MAP under the special
assumption that the projections are local, associated with a
frustration-free $k$-local Hamiltonian, with a convergence rate that
is given as a function of the spectral gap. It would be interesting
to see if more insight can be derived from these connections.  

Recently, much attention was given to a quantum algorithm which,
given a local Hamiltonian, uses a process involving random 
\emph{measurements} of the energies of the local terms to approach
the ground state efficiently (for certain cases) \cite{ref:Ver09,
ref:Far09}. The algorithm discussed in those papers carries
similarities to the situation we are handling here, despite the fact
that measurements are applied rather than projections, and also that
the terms are chosen randomly, rather than in some fixed order.  It
seems that the DL lemma, and the energy-norm trade off, could
potentially be useful also for the analysis of such algorithms.  In
particular, it would be very interesting to see a version of the
detectability lemma which applies for the case in which the terms
are chosen randomly. 

As discussed above, it is a wide open question to apply the
combinatorial tools presented in this paper to the major open
problems of quantum PCP and area laws in dimensions higher than $1$,
as well as to many other basic open questions in quantum Hamiltonian
complexity.

{~}

\noindent\textbf{Paper organization:}\\ 
We start with notations and preliminaries in \Sec{sec:not}, and then
proceed to the statement and proof of the DL in \Sec{sec:det}. In
\Sec{sec:LR}, we provide the example comparing the LR bound approach
to the DL one. We then proceed to the area law proof in
\Sec{sec:area}, and conclude with the one page proof of the
exponential decay in \Sec{sec:exp}.

%%%%%%%%%%%%%%%%%%%%%%%%%%%%%%%%%%%%%%%%%%%%%%%%%%%%%%%%%%%%%%%%%%%%%%%%%%%%
\section{Notations and Preliminaries}
\label{sec:not}

We consider a $k$-local Hamiltonian $H$ acting on
$\Hi=(\BBC^d)^{\otimes n}$, the space of $n$ particles of dimension
$d$. $H=\sum_i H_i$ where each $H_i$ is a non-negative and bounded
operator that acts non-trivially on a constant number of $k$ qubits
(hence the term local Hamiltonian).  We assume that $H$ has a
ground space of energy $0$, which must therefore also be a common
zero eigenspace of all terms $H_i$. This means that $H$ is
frustration free.  We also assume that $H$ is ``gapped'', meaning
that its lowest eigenvalue is 0 (the ground energy) and all the next
are equal or larger than some constant $\epsilon>0$. We denote by
$\Hi' \subset \Hi$ the orthogonal complement ground space of $H$.
Thus $\Hi'$ is an invariant subspace for $H$, and 
\begin{align} 
\label{eq:gapcondition}
  H|_{\Hi'} \geq \epsilon \Id \ . 
\end{align}

Most of these assumptions, except for perhaps the frustration-free
assumption, are very often used in condensed matter physics.

Throughout this paper we further assume that the $H_i$'s are
projections, and hence would be denoted by $Q_i$. We define $P_i$ to
be the projection on the ground space of $Q_i$, $P_i\EqDef\Id-Q_i$. 
The assumption that $H$ is made of projections is not actually a
restriction because we can reduce any frustration-free, bounded and
gapped system into that case. Specifically, for $H=\sum_i H_i$ with
$\norm{H_i}\le K$, and a spectral gap $\tau>0$, we first add an
appropriate constant to each $H_i$ such that their ground energy is
0. Then for every $i$ we define $Q_i$ as the projection into the
space where the energy of $H_i$ is greater than 0 and $P_i\EqDef
\Id-Q_i$ as the projection to the ground space of $H_i$. Finally, we
define the auxiliary Hamiltonian $H'=\sum_i Q_i$.  This system is
frustration free because the original ground states would also be
ground states in $H'$ with a vanishing energy. Moreover, for any
state $\ket{\psi_\perp}\in\Hi'$ and every $H_i$,
\begin{align*}
  \bra{\psi_\perp}H_i\ket{\psi_\perp} =
  \bra{\psi_\perp}Q_iH_iQ_i\ket{\psi_\perp} \le
    K\bra{\psi_\perp}Q_i\ket{\psi_\perp} \ ,
\end{align*}
and therefore the gap in $H'$ is $\epsilon \ge \tau/K$. It follows
that all of our results can be applied to bounded frustration-free
Hamiltonians by replacing the gap $\epsilon$ in DL with the scaled
version $\tau/K$. 

Given a state $\ket{\phi}$ and a partition of the qubits to two 
non intersecting sets, $R$ and $L$, with corresponding Hilbert spaces
$\Hi_L, \Hi_R$, we can consider the Schmidt decomposition of the
state along this cut: $\ket{\phi} = \sum_j \alpha_j
\ket{L_j}\otimes\ket{R_j}$.  Here $\alpha_1\ge \alpha_2\ge \ldots$
are the \emph{Schmidt coefficients}. Their squares are equal to the
non-zero eigenvalues of the reduced density matrices to either
side of the cut $\rho_L(\phi)$ and $\rho_R(\phi)$, which we denote
by $\lambda_1 \geq \lambda_2 \geq \cdots$.  The \emph{Schmidt rank}
of $\ket{\phi}$ is then the number of non-zero eigenvalues
$\lambda_j$ (or Schmidt coefficients $\alpha_j$), and the
\emph{entanglement entropy} is the entropy of the set
$\{\lambda_i\}$, or, equivalently, the von Neumann entropy of the
matrix $\rho_L(\phi)$. A straightforward corollary of the
Eckart-Young theorem \cite{ref:Eck36} is then that the truncated
Schmidt decomposition provides the best approximation to a vector in
the following sense:
\begin{fact}  
\label{f:rankapprox} 
  Let $\ket{\phi}$ be a vector on $\Hi_L \otimes \Hi_R$, and let
  $\lambda_1\ge \lambda_2\ge\ldots$ be the eigenvalues of its
  reduced density matrix.  The largest inner product between
  $\ket{\phi}$ and a norm one vector with Schmidt rank $r$ is
  $\sqrt{\sum_{j=1}^r\lambda_j}$.
\end{fact}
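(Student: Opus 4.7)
I would split the claim into exhibiting a maximizer and proving a matching upper bound. For the maximizer, the obvious candidate is the normalized rank-$r$ truncation of the Schmidt decomposition,
\[
  \ket{\psi}\EqDef \frac{1}{\sqrt{\sum_{j=1}^r\lambda_j}}\sum_{j=1}^r \alpha_j\ket{L_j}\otimes\ket{R_j},
\]
which has Schmidt rank $r$, unit norm, and inner product $\braket{\psi}{\phi}=\sqrt{\sum_{j=1}^r\lambda_j}$ by direct computation using orthonormality of $\{\ket{L_j}\}$ and $\{\ket{R_j}\}$ and the fact that $\alpha_j^2=\lambda_j$.

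\textbf{Reduction to a matrix question.} For the upper bound I would pass through the standard isomorphism $\Hi_L\otimes\Hi_R\cong \mathrm{Hom}(\Hi_R,\Hi_L)$ sending a state to its coefficient matrix in a fixed product basis. Under this isomorphism the Schmidt decomposition is exactly the singular value decomposition, so the Schmidt rank of $\ket{\phi}$ equals the matrix rank of $\Phi$, the Schmidt coefficients equal its singular values $\sigma_j(\Phi)=\sqrt{\lambda_j}$, the state norm equals the Frobenius norm $\norm{\Phi}_F$, and $\braket{\psi}{\phi}=\Tr(\Psi^\dagger\Phi)$. The statement reduces to: for every matrix $\Psi$ of rank at most $r$ with $\norm{\Psi}_F=1$, one has $\abs{\Tr(\Psi^\dagger\Phi)}\le \sqrt{\sum_{j=1}^r\lambda_j}$.

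\textbf{Main step.} To close this inequality I would invoke the Eckart--Young theorem (already cited in the statement), which says that the rank-$r$ truncated SVD $\Phi_r$ is the best rank-$\le r$ Frobenius approximation of $\Phi$, with $\norm{\Phi-\Phi_r}_F^2=\norm{\Phi}_F^2 - \sum_{j=1}^r\lambda_j$. For any fixed $\Psi$ of rank $\le r$ and any scalar $c\in\BBC$, the matrix $c\Psi$ still has rank $\le r$, so Eckart--Young yields
\[
  \norm{\Phi-c\Psi}_F^2 \;\ge\; \norm{\Phi}_F^2 - \sum_{j=1}^r \lambda_j.
\]
Expanding $\norm{\Phi-c\Psi}_F^2=\norm{\Phi}_F^2-2\mathrm{Re}\big(\bar c\,\Tr(\Psi^\dagger\Phi)\big)+\abs{c}^2$ and minimizing over $c$ gives the minimum value $\norm{\Phi}_F^2-\abs{\Tr(\Psi^\dagger\Phi)}^2$. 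Combining these yields $\abs{\Tr(\Psi^\dagger\Phi)}^2\le \sum_{j=1}^r\lambda_j$, which is the desired bound.

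\textbf{Main obstacle.} There is no genuine analytic difficulty here; the fact really is a short corollary of the classical Eckart--Young theorem, exactly as the paper advertises. The only step needing care is the bookkeeping for the state--matrix dictionary, making sure that Schmidt rank, Schmidt coefficients, state norm, and state inner product translate faithfully to matrix rank, singular values, Frobenius norm, and the Hilbert--Schmidt inner product. Once that correspondence is in place, the entire argument is a one-line optimization on top of Eckart--Young.
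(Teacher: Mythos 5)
Your proof is correct, and it takes the same route the paper intends: the paper states this Fact without proof, merely calling it ``a straightforward corollary of the Eckart--Young theorem,'' and your argument (truncated Schmidt decomposition as the maximizer, plus the state--matrix dictionary and the scaling trick to extract the inner-product bound from the Frobenius approximation bound) is a complete and correct instantiation of exactly that corollary.
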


%%%%%%%%%%%%%%%%%%%%%%%%%%%%%%%%%%%%%%%%%%%%%%%%%%%%%%%%%%%%%%%%%%%%%%%%%
\section{The detectability lemma: A new proof}
\label{sec:det}

For clarity of presentation, we will prove the DL in the case stated
in the introduction: where the particles are set on a line and the
local terms are two-local involving nearest neighbors.  This proof
contains all the necessary ingredients for the proof of the more
general DL in the case where the Hamiltonian has $k$ local terms
that can be partitioned into $g$ layers; we make the
precise statement of the more general case at the end of this
section.

We begin with a simple lemma that quantifies the norm-energy
trade-off in the simple case of two projections $X,Y$: we show that
if the application of $XY$ does not move a vector very much then the
energy of that vector with respect to $\Id-Y$ must be small:

\begin{lem}
\label{lem:normenergy}
 Given arbitrary projections $X,Y$ and $\ket{v}$ of norm 1, if
 $\norm{XYv}^2=1-\epsilon$ then
 \begin{align}
 \label{eq:normenergy}
   \norm{(\Id-Y)XY v}^2 \leq \epsilon(1-\epsilon) \ .
 \end{align}
\end{lem}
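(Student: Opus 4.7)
The plan is to convert the stated bound on $\|(\Id-Y)XYv\|^2$ into an equivalent lower bound on $\|YXYv\|^2$ via the Pythagorean identity for the projection $Y$, and then to obtain that lower bound by a single application of Cauchy--Schwarz, exploiting the fact that $Y\ket{v}$ already lies in the range of $Y$.

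First, I would decompose $XY\ket v$ along the orthogonal splitting $\Id = Y + (\Id-Y)$. Since $Y$ and $\Id-Y$ are complementary orthogonal projections, Pythagoras gives
\begin{equation*}
    \|(\Id-Y)XYv\|^2 = \|XYv\|^2 - \|YXYv\|^2 = (1-\epsilon) - \|YXYv\|^2,
\end{equation*}
so the lemma is equivalent to the lower bound $\|YXYv\|^2 \geq (1-\epsilon)^2$.

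Next, I would set $\ket u \EqDef Y\ket v$; because $Y^2=Y$, this vector satisfies $Y\ket u = \ket u$ and $\|u\|\leq \|v\| = 1$, while the hypothesis becomes $\|Xu\|^2 = 1-\epsilon$. The crucial observation is that the two identities $Y\ket u=\ket u$ and $X^2 = X$ collapse the apparent triple product $\bra u YX\ket u$ into a single expectation value:
\begin{equation*}
    \bra u YX \ket u \;=\; \langle Yu \mid Xu\rangle \;=\; \bra u X \ket u \;=\; \bra u X^2 \ket u \;=\; \|Xu\|^2 \;=\; 1-\epsilon .
\end{equation*}
Cauchy--Schwarz applied to $\bra u YX \ket u = \langle u \mid YXu\rangle$ then gives $1-\epsilon \leq \|u\|\cdot\|YXu\| \leq \|YXu\| = \|YXYv\|$, which is exactly the required lower bound. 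Combined with the Pythagorean step above, this yields $\|(\Id - Y)XYv\|^2 \leq (1-\epsilon) - (1-\epsilon)^2 = \epsilon(1-\epsilon)$.

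In my view, the only real insight is the substitution $\ket u = Y\ket v$: it reduces a statement about three successive projection applications $X,Y,X$ to an inner product on a vector already fixed by $Y$, at which point Cauchy--Schwarz delivers the bound for free. No finer operator-theoretic input about the pair $(X,Y)$ (for example, a Jordan-form analysis of $XY$ or the Friedrichs angle between the ranges) appears to be necessary, and the only mild subtlety, $\|u\|\leq 1$ rather than $\|u\|=1$, plays in our favor because $\|u\|$ appears on the upper-bounding side of the Cauchy--Schwarz chain.
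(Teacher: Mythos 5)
Your proof is correct, and it is essentially the paper's argument in a different dress: the paper's operator inequality $\Id-Y \le \Id-\ketbra{w}{w}$ with $\ket{w}\EqDef Y\ket{v}/\norm{Yv}$ is precisely your Cauchy--Schwarz lower bound $\norm{YXYv}\ge |\braket{w}{XYv}|$, and both arguments then rest on the same two ingredients, the idempotence identity $\bra{u}X\ket{u}=\norm{Xu}^2$ and the slack $\norm{Yv}\le\norm{v}=1$. The intermediate bounds even coincide exactly, so this is the same proof presented via Pythagoras plus Cauchy--Schwarz rather than via a rank-one operator comparison.
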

The proof is given in the Appendix. Let us now proceed to prove the
detectability lemma.
\begin{proof}[\ of \Lem{lem:detect2}] \ \\
  
  Suppose $\ket{\psi}\in \Hi'$ is a norm 1 state that is orthogonal
  to the ground space, and define $\ket{\phi}\EqDef A\ket{\psi}$. 
  Notice that for every ground state $\ket{\Omega}$,
  $\bra{\Omega}A\ket{\psi}=0$ and so $\ket{\phi}$ is orthogonal to
  the ground space. We would like to show that
  \begin{align}
    \label{eq:phi-ineq}
    \norm{\phi}\le\frac{1}{(\epsilon/2+1)^{1/3}} \ .
  \end{align}
  We will find both a lower and upper bounds for the energy of
  $\ket{\phi}$, $\bra{\phi}H\ket{\phi}$, which
  will give us an inequality for $\norm{\phi}^2$, from which
  \Eq{eq:phi-ineq} will follow. 
  
  The lower bound is straightforward since $\ket{\phi}$ is
  orthogonal to the ground state, and so 
  \begin{align}
  \label{eq:lowerbound}
    \bra{\phi}H\ket{\phi} \ge \epsilon \norm{\phi}^2 \ .
  \end{align}

  We shall now upper bound the energy $\bra{\phi}H\ket{\phi}$ by
  carefully upper bounding the contributions of the individual terms
  $\bra{\phi}Q_i\ket{\phi}$.  We begin by noting that these terms
  are equal to $0$ for $i$ odd since $A = \Pi_{odd} \Pi_{even}$ and
  $Q_i \Pi_{odd}=0$ for any odd $i$ (recall that $\Pi_{even},
  \Pi_{odd}$ are products of the projections $P_i=\Id-Q_i$).  We now
  want to bound the contributions coming from the even terms.

  For this purpose we present $A$ in a convenient form, by
  reordering its terms.  We call the triplet product of projections
  $(P_1 P_3 P_2), (P_5 P_7 P_6), \ldots$ \emph{pyramids}, and denote
  them by $\Delta_i= P_{4i-3} P_{4i-1} P_{4i-2}$; The remaining
  terms are combined to the operator $R\EqDef P_4 P_8 \dots$. See
  \Fig{fig:2layers-with-pyr} for an illustration of this structure
  in 1D. Notice that by just using the fact that $P_i$ and $P_j$
  commute when $i$ and $j$ are not consecutive, we can write: 
  \begin{align*}
    A= \Delta _1 \Delta_2 \dots \Delta_{m} R \ ,
  \end{align*} 
  where $m$ is the number of pyramids which is approximately
  $n/4$.  

  \begin{figure}
    \begin{center}
      \includegraphics[scale=0.8]{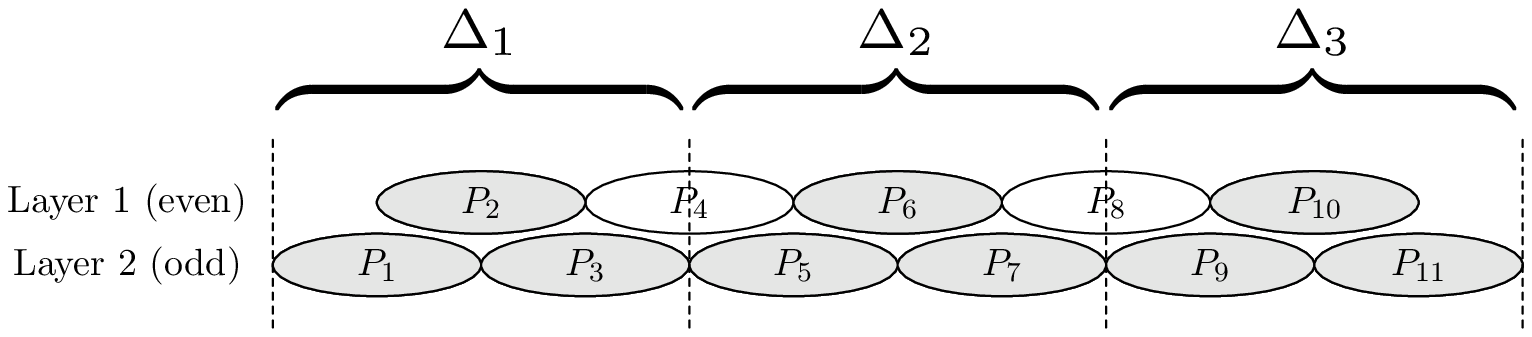}        
    \end{center}
    \caption{ \label{fig:2layers-with-pyr}}
  \end{figure}

  We will use this reordering to bound the energy contribution of
  the terms $Q_2, Q_6 , \dots$; a symmetric argument will bound the
  remaining even terms $Q_4,Q_8, \ldots$ etc.  The energy contribution of
  $Q_{4i-2}$ will be related to the amount of movement produced by
  the $\Delta _i$ portion of of the operator $A$.

  The key point in providing this bound is this.  We view the
  transformation of $\ket{\psi}\rightarrow\ket{A \psi}= \ket{\phi}$
  as a series of steps given by the application of the pyramids
  $\Delta _i$.  Specifically, letting $\ket{v_i}\EqDef
  \Delta_i\Delta_{i+1}\cdots\Delta_{m}R\ket{\psi}$, we consider the
  transformation $\ket{\psi} \rightarrow R \ket{\psi} \rightarrow
  \ket{v_{m}} \rightarrow \ket{v_{m-1}}\rightarrow \dots \rightarrow
  \ket{v_1}=A\ket{\psi}$.  The square of the norm of the first
  state, after applying $R$, is $a_{m}\EqDef\norm{R \psi}^2$.  Let
  $a_i\EqDef \norm{v_i}^2/\norm{v_{i+1}}^2$ be the ``shrinkage'' of
  the square norm (or movement) resulting from the application of
  the $i$th pyramid, for $1\leq i<m$.
 
  We shall now prove, using \Lem{lem:normenergy}, that the shrinkage
  $a_i$ is related to the energy of the operator $Q$ at the top of
  the same pyramid $\Delta_i$:
  \begin{align*}
    \bra{\phi} Q_{4i-2} \ket{\phi} \leq 1-a_i \ .
  \end{align*}
  We write
  \begin{align*}
    \bra{\phi}Q_{4i-2}\ket{\phi} 
      &= \norm{(\Id-P_{4i-2}) A \psi}^2
       = \norm{\Delta_1\cdots\Delta_{i-1} 
         (\Id- P_{4i-2})\Delta_i v_{i+1}}^2 \\
      &\leq \norm{(\Id-P_{4i-2} )\Delta_i v_{i+1}}^2 \ ,
  \end{align*} 
  and recall that $\Delta_i=P_{4i-3}P_{4i-1}P_{4i-2}$. We can
  therefore apply \Lem{lem:normenergy} to $(\Id-P_{4i-2})\Delta_i
  \frac{v_{i-1}}{\norm{v_{i-1}}}$ (with $Y=P_{4i-2}$ and
  $X=P_{4i-3}P_{4i-1}$), and conclude that
  $\norm{(\Id-P_{4i-2})\Delta_i v_{i-1}}^2 \leq (1-\norm{\Delta_i
  v_{i-1}}^2/\norm{v_{i-1}}^2) \norm{v_{i-1}}^2 = (1-a_i)
  \norm{v_{i-1}}^2 \leq (1-a_i)$. Consequently $\bra{\phi}
  Q_{4i-2}\ket{\phi} \le 1-a_i$. 
  
  Using this upper bound gives an upper bound for the energy
  contribution for $Q_i$, $i \in \{2,6,10, \dots \}$: 
  \begin{align*}
    \bra{\phi} (Q_2 + Q_6  + \ldots)\ket{\phi}
      \le \sum_i (1-a_i) \ ,
  \end{align*}
  with the constraint on the norm $\norm{\phi}^2 \le\prod
  a_i$.  The right hand side is maximized when all the $a_i$s are
  equal to each other, i.e., $a_i= \norm{\phi}^{\frac{2}{m}}$, and
  therefore we are left with an upper bound of the energy coming
  from $Q_2 + Q_6 + \dots$ as:
  \begin{align*}
        \bra{\phi} (Q_2 + Q_6  + \ldots)\ket{\phi}
          \le m\left[1-\norm{\phi}^{2/m}\right] 
          \le \frac{1-\norm{\phi}^2}{\norm{\phi}} \ ,
  \end{align*}
  where the last inequality follows from the fact\footnote{This
  inequality can be easily verified by noticing that $f_m(x)\EqDef
  m\sqrt{x}\left[1-x^{1/m}\right]+x$ is equal to 1 for $x=1$
  and has a non-negative derivative for $x\in[0,1]$ and $m\ge 1$.} that for every
  $x\in[0,1]$, we have $m\left[1-x^{1/m}\right] \le
  \frac{1-x}{\sqrt{x}}$.
  
  For the energy of $Q_4 + Q_8 + \ldots$, a similar
  decomposition to $A= (P_3 P_5 P_4)(P_7P _9 P_8) \cdots (P_2 P_6
  \cdots )$ can be made, thereby upper bounding the energy by
  $2\frac{1-\norm{\phi}^2}{\norm{\phi}}$. Combining the energy upper and lower
  bounds we therefore get
  \begin{align}
    \epsilon\norm{\phi}^2 \le \bra{\phi}H\ket{\phi} \le
    2\frac{1-\norm{\phi}^2}{\norm{\phi}} \ ,
  \end{align}
  and so
  \begin{align}
    \frac{\epsilon}{2}\norm{\phi}^3 \le 1-\norm{\phi}^2 \le 1-\norm{\phi}^3 \ ,
  \end{align}
  which  gives
  \begin{align}
    \norm{\phi} \le \frac{1}{\left( \epsilon/2 + 1\right)^{1/3}} \ .
  \end{align}
\end{proof}

The above proof can be easily generalized to other geometries. In
the general case, in accordance with \Sec{sec:not}, we assume we
have a $k$-local, frustration-free Hamiltonian $H=\sum_i Q_i$ that
is made of projections and has a spectral gap $\epsilon>0$. We
further assume that each particle participates in a constant number
of projections, and therefore the $Q_i$ can be partitioned into a
constant number of $g$ layers; each layer is made of projections
that do not intersect each other and are therefore commuting.

Then for each layer we define the projection $\Pi_i$ as the product
of all $P_j=\Id-Q_j$ that are in the layer, and define the DL operator $A$
by
\begin{align}
  \label{def:gen-A}
  A \EqDef \Pi_g \cdots \Pi_1 \ .
\end{align}
Finally, we define $f(k,g)$ to be the number of \emph{sets} of
pyramids that are necessary to estimate the energy contribution of
all the $Q_i$ terms. In the 1D case that we proved, we had
$f(k,g)=2$, because only the even layer contributed energy and we
needed two sets of pyramids to cover that layer. In the general case it
is easy to see that $f(g,k)$ can be crudely bounded by $f(g,k) \le
(g-1)k^g$.

Using the above definitions, the general DL is
\begin{lem}[The detectability lemma]
\label{lem:gen-DL}
  Consider the local Hamiltonian system that is described above. Then
  \begin{align}
  \label{eq:gen-DL}
    \norm{A|_{\Hi'}} \le \frac{1}{\big[\epsilon/f(k,g) +1\big]^{1/3}} \ .
  \end{align}  
\end{lem}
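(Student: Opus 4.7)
The plan is to mimic the 1D proof step-for-step, with the main new work being a more general pyramid construction. Fix a norm-one $\ket{\psi}\in\Hi'$ and set $\ket{\phi}\EqDef A\ket{\psi}$; as before each $P_j$ fixes the ground space, so $\ket{\phi}\in\Hi'$ and $\bra{\phi}H\ket{\phi}\ge\epsilon\norm{\phi}^2$. The task is to match this with an upper bound on $\bra{\phi}H\ket{\phi}$ of the form $f(k,g)(1-\norm{\phi}^2)/\norm{\phi}$. For terms $Q_i$ in the top layer $g$, the factor $P_i$ sits inside the leftmost $\Pi_g$ and commutes with the other factors of $\Pi_g$, so $Q_i\Pi_g=0$; these terms contribute zero energy, exactly as the odd-layer terms did in 1D.

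For each $Q_i$ in a lower layer $\ell<g$, I would associate a pyramid $\Delta$ with top $P_i$ and base composed of the projections from higher layers whose supports lie in the causality cone of $Q_i$; by $k$-locality each pyramid touches at most $\orderof{k^g}$ sites. The combinatorial step is to partition the non-top terms into $f(k,g)$ groups such that within each group the pyramids have pairwise disjoint supports (hence commute), via a standard colouring argument on the hypergraph of pyramid supports; a crude count then gives $f(k,g)\le(g-1)k^g$, with the factor $(g-1)$ coming from one pass per non-top layer. Within each group the 1D argument would carry over verbatim: reorder $A=\Delta_1\cdots\Delta_M R$, define $\ket{v_i}\EqDef\Delta_i\cdots\Delta_M R\ket{\psi}$ with shrinkages $a_i\EqDef\norm{v_i}^2/\norm{v_{i+1}}^2$, apply \Lem{lem:normenergy} with $Y=P_i$ and $X$ the base of $\Delta_i$ to obtain $\bra{\phi}Q_i\ket{\phi}\le 1-a_i$, and invoke the inequality $m(1-x^{1/m})\le(1-x)/\sqrt{x}$ together with $\prod_i a_i\ge\norm{\phi}^2$ to bound the group's contribution by $(1-\norm{\phi}^2)/\norm{\phi}$. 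Summing over the $f(k,g)$ groups yields $\bra{\phi}H\ket{\phi}\le f(k,g)(1-\norm{\phi}^2)/\norm{\phi}$, which combines with the lower bound to give $\norm{\phi}\le(\epsilon/f(k,g)+1)^{-1/3}$.

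The main obstacle is ensuring that the base of each pyramid is itself a projection, as required by \Lem{lem:normenergy}. In 1D this was automatic because the base sat inside a single layer of mutually commuting projections; for $g>2$ a base naturally wants to span several layers across which commutativity breaks down. The cleanest remedy is to restrict each pyramid to just two adjacent layers $\ell$ and $\ell+1$, absorbing the remaining higher-layer factors of $A$ into the remainder $R$ and iterating the construction once per non-top layer --- this is precisely where the factor $(g-1)$ in the bound on $f(k,g)$ enters. I expect the remaining combinatorics on the interaction hypergraph to be routine once the two-layer pyramid construction is in place.
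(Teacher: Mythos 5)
Your outline follows the route the paper intends for the general case --- zero energy contribution from the last-applied layer, cone-shaped pyramids for the remaining terms, a partition into $f(k,g)$ families of mutually disjoint pyramids, and the same shrinkage/energy bookkeeping ending in $m(1-x^{1/m})\le(1-x)/\sqrt{x}$ --- and you have correctly put your finger on the one point the paper leaves implicit: \Lem{lem:normenergy} needs the base $X$ of each pyramid to be a projection (its proof uses $\bra{w}X\ket{w}=\norm{Xw}^2$ and $\bra{w}XX\ket{w}=\norm{Xw}^2$, i.e.\ $X=X^\dagger=X^2$), and this fails once the base spans more than one layer. The problem is that your proposed remedy does not work. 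For a term $Q_i$ in layer $\ell\le g-2$, the crucial step of the 1D argument is $\norm{Q_iA\psi}=\norm{(\text{prefix})\,Q_i\Delta_i v}\le\norm{Q_i\Delta_i v}$, which requires every projection of $A$ standing to the left of $\Delta_i$ to commute with $Q_i$. The terms of layers $\ell+2,\dots,g$ lying over the causality cone of $Q_i$ generally do \emph{not} commute with $Q_i$ (being in distant layers does not prevent spatial overlap), and in $A=\Pi_g\cdots\Pi_1$ they are applied after $\Pi_\ell$, i.e.\ they necessarily stand to the left of wherever $\Delta_i$ can be placed. They cannot be ``absorbed into the remainder $R$'': in the 1D proof $R$ sits at the \emph{right} end of the product (it is applied first) and only gets there because it consists of same-layer, mutually commuting terms, whereas for $g\ge 3$ the higher layers do not commute past the lower ones. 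A two-layer pyramid therefore leaves non-commuting obstructions between $Q_i$ and its pyramid, and the bound $\bra{\phi}Q_i\ket{\phi}\le 1-a_i$ is not established.

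If you instead keep the full multi-layer cone as the base --- which is exactly what makes the reordering $A=(\text{prefix})\Delta_i(\text{suffix})$ with a $Q_i$-commuting prefix go through --- then the base is a product $C_g\cdots C_{\ell+1}$ of per-layer projections that do not mutually commute, hence not a projection, and \Lem{lem:normenergy} does not apply as stated. Closing the argument therefore needs a genuinely new ingredient (an extension of the norm-energy trade-off to $X$ a product of projections, or a different decomposition altogether), not a verbatim repetition of the 1D proof. To be fair, the paper itself only asserts that the 1D proof ``can be easily generalized'' and defines $f(k,g)$ operationally as the number of pyramid families required, so on the combinatorial side your sketch is at the same level of detail as the paper's; but the specific fix you propose for the one genuinely new analytic difficulty is incorrect, and that is the gap you still need to fill.
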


%%%%%%%%%%%%%%%%%%%%%%%%%%%%%%%%%%%%%%%%%%%%%%%%%%%%%%%%%%%%%%%%%%%%%%%%%%%%%%
\section{Comparing the Lieb-Robinson bound approach and the 
  detectability lemma approach}
\label{sec:LR}

In this section, we compare the DL with a standard method used in 
many of the seminal results in quantum Hamiltonian complexity, such
as Hastings' areas law for 1D gapped systems \cite{ref:Has07}, and
Hastings' exponential decay of correlations proof \cite{ref:Has04}.
The method combines the use of the Lieb-Robinson bound (LR bound),
with a Fourier analysis and the existence of a gap, to reveal the
locality properties of the ground state. More specifically, the
method uses these tools to approximate expressions that involve the
projection operator to the ground state, $\gs $, by local operators.

To understand how this is done,
let us concentrate on a simple example of locality in the ground
state, and derive it using both the DL and the LR bound.

We focus on $\gs $, the projection on the ground space of $H$.  On
the surface, this projector seems very far from being local in any
sense. Nevertheless, in gapped systems it does possess some locality
properties that are crucial to the analysis of correlations and
entanglement in the ground state. To see this, one standartly
considers an approximation of $\gs $ by another operator: 
\begin{align}
  \label{eq:Pq}
  P_q \EqDef \frac{1}{\sqrt{2\pi q}} \int\! dt \, e^{-t^2/2q} e^{-iHt} \ .
\end{align}
Here $q$ is a free parameter to be chosen as appropriate.  For an
eigenvector $\ket{E}$ of $H$ with eigenvalue $E$, we have
$P_q\ket{E} = e^{-qE^2/2}\ket{E}$. Consequently, if the system has a
constant spectral gap $\epsilon>0$, $P_q$ indeed approximates $\gs $
well: 
\begin{align*}
  \norm{P_q - \gs } \le e^{-q\epsilon^2/2} \ .
\end{align*}

We now want to argue regarding the local nature of $P_q$ in various
contexts. Let us illustrate the LR bound approach with a simple
example: consider the expression $\gs  B \ket{\Omega}\approx P_q
B\ket{\Omega}$, where $\ket{\Omega}$ is a ground state of the system
with zero energy, and $B$ is some local perturbation. It is easy to
see that 
\begin{align}
\label{eq:Pq-action}
  P_q B \ket{\Omega}
   = \frac{1}{\sqrt{2\pi q}}\int\! dt\, e^{-t^2/2q} e^{-iHt} B e^{iHt}
     \ket{\Omega}
   =\frac{1}{\sqrt{2\pi q}}\int\! dt\, e^{-t^2/2q} B(t)\ket{\Omega} \ .
\end{align}
$B(t)\EqDef e^{-iHt}Be^{iHt}$ is the time evolution of the
perturbation $B$. The key point is now to use the famous LR bound, to
approximate it by a ``local'' operator, i.e., an operator which acts
only on the ``neighborhood'' of the particles on which $B$ acts. The
following is an immediate corollary of the original LR bound, which
we omit for sake of brevity. The full statement of the LR bound,
together with the proof of this corollary can be found in
\Ref{ref:Has10}.

\begin{theorem}[Lieb-Robinson bound (LR bound) , adapted \cite{ref:Has10}]
  Given a local Hamiltonian $H=\sum_i H_i$ on $n$ particles, there
  exists a constant velocity $v$ s.t. $B(t)$ can be approximated by an
  operator denoted $B_\ell(t)$ whose support is inside a ball of
  radius $\ell=vt$ around the support of $B$,  s.t.  
  \begin{align}
    \label{eq:approx}
    \norm{B(t)-B_\ell(t)} \le \norm{B}\cdot e^{-\orderof{\ell}} \ .
  \end{align}
\end{theorem}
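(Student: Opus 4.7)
The plan is to reduce the approximation statement to the standard commutator form of the Lieb-Robinson bound, which asserts that for operators $B$ supported on a region $X$ and $O$ supported on a region $Y$ with graph distance $d(X,Y)$, there exist constants $c,\mu,v_{LR}>0$ such that $\norm{[B(t),O]}\le c\,\norm{B}\,\norm{O}\,|X|\,e^{-\mu(d(X,Y)-v_{LR}|t|)}$. I would take this commutator bound as a black box and use it to manufacture a local operator $B_\ell(t)$ whose distance from $B(t)$ is controlled by the right hand side of this inequality.

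The construction I have in mind is the Haar twirl over the complement of the ball. Let $S$ denote the ball of radius $\ell$ around the support $X$ of $B$, and let $\bar S$ be its complement. Define
\begin{align*}
B_\ell(t) \EqDef \int U_{\bar S}\, B(t)\, U_{\bar S}^\dagger\, dU_{\bar S},
\end{align*}
where $dU_{\bar S}$ is the Haar measure on the unitary group acting on $\bar S$. By translation invariance of the Haar measure, $B_\ell(t)$ commutes with every unitary supported on $\bar S$, which forces it to have the form $\tilde B \otimes \bigl(\Id_{\bar S}/\dim\Hi_{\bar S}\bigr)$ for some $\tilde B$ acting on $S$. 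Hence $B_\ell(t)$ is indeed supported on the ball of radius $\ell$, as required.

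To control the approximation error I would pull the norm inside the integral:
\begin{align*}
\norm{B(t)-B_\ell(t)} \le \int \norm{B(t) - U_{\bar S}\,B(t)\,U_{\bar S}^\dagger}\, dU_{\bar S} = \int \norm{[B(t),U_{\bar S}]}\, dU_{\bar S} \le \sup_U \norm{[B(t),U]},
\end{align*}
where the supremum runs over unitaries $U$ supported on $\bar S$. The right hand side is exactly what is bounded by the commutator form of the Lieb-Robinson bound with $d(X,\bar S)=\ell$. Choosing $v$ to be any fixed constant larger than $v_{LR}$ (for instance $v=2v_{LR}$), so that $\ell=vt$ yields $\ell-v_{LR}t\ge \ell/2$, this produces $\norm{B(t)-B_\ell(t)}\le c\,|X|\,\norm{B}\,e^{-\mu\ell/2} = \norm{B}\cdot e^{-\orderof{\ell}}$, since $|X|$ is a constant for a local observable.

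The main obstacle I expect lies not in assembling the above reduction but in actually establishing the commutator bound that I am taking as input; the standard derivation introduces $C(t)\EqDef\norm{[B(t),O_Y]}$, differentiates using $\partial_t B(t)=i[H,B(t)]$ together with the Jacobi identity, and iterates the resulting integral inequality into a sum over paths of overlapping local terms $H_Z$ connecting the support of $B$ to $Y$. Exponential decay emerges because the number of length-$r$ paths grows only exponentially in $r$ while each step contributes a factor of order $\norm{H_Z}\,|t|$, and optimizing over the cut-off path length yields the finite propagation velocity $v_{LR}$.
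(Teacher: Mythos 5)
The paper does not actually prove this statement: it is quoted as an adaptation of the Lieb-Robinson bound, and the text explicitly defers both the original bound and the derivation of this corollary to \Ref{ref:Has10}. So there is no in-paper argument to compare against; your proposal has to stand on its own, and it does. The reduction you give is the standard one: the Haar twirl over the complement $\bar S$ produces $\Tr_{\bar S}\bigl(B(t)\bigr)\otimes\Id_{\bar S}/\dim\Hi_{\bar S}$, which is manifestly supported on the ball, and the telescoping $B(t)-U_{\bar S}B(t)U_{\bar S}^\dagger=[B(t),U_{\bar S}]U_{\bar S}^\dagger$ together with convexity of the norm correctly reduces the approximation error to $\sup_U\norm{[B(t),U]}$ over unitaries on $\bar S$. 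Applying the commutator form of the LR bound with $\norm{U}=1$ and $d(X,\bar S)\ge\ell$, and taking $v=2v_{LR}$ so that the exponent is at most $-\mu\ell/2$, gives exactly \Eq{eq:approx} (the $|X|$ prefactor is harmless for a local observable, as you note; one should just make sure the version of the commutator bound used carries $|X|$ or $\min(|X|,|\cdot|)$ rather than the size of the far region, since $\bar S$ is macroscopic). You are candid that the commutator bound itself is taken as a black box, and your sketch of its proof (differentiate $\norm{[B(t),O]}$, use the Jacobi identity, iterate into a sum over paths of overlapping terms) is the correct standard route. In short: correct, and consistent with the derivation the paper points to in \Ref{ref:Has10} rather than a novel alternative.
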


Given a length scale $\ell>0$, we may now set $q=\ell$ in
\Eq{eq:Pq-action} and obtain
\begin{align*}
  \gs  B\ket{\Omega} \approx P_\ell B\ket{\Omega} \approx
    \frac{1}{\sqrt{2\pi\ell}}\int_{|t|\le \ell}\! dt\, 
      e^{-t^2/2\ell} B(t)\ket{\Omega}
   \approx \frac{1}{\sqrt{2\pi\ell}}\int_{|t|\le \ell}\! dt\, 
      e^{-t^2/2\ell} B_\ell(t)\ket{\Omega} \ .
\end{align*}
In the above series of approximations $\approx$ implies an
approximation of up to an error of $e^{-\orderof{\ell}}$. The 1st
approximation follows from the assumption of the constant gap and
\Eq{eq:Pq}. The 2nd approximation is due to the exponential decay of
the filter function $e^{-t^2/2\ell}$, and the 3rd is due to the LR
bound. We therefore get an exponentially (in $\ell$) good
approximation to $\gs B$ in the expression $\gs B\ket{\Omega}$ by an
operator which is $\ell$-local. 

Let us now derive the same result for the frustration-free case
using the DL. First, we approximate the ground space projection
$\gs $ by applying the DL operator $A$ for $m$ times. By
\Eq{eq:gen-DL}, $A$ leaves the ground space invariant while
shrinking the orthogonal space by a constant factor. Therefore
\begin{align}
  \label{eq:central}
  \gs  = A^m + e^{-\orderof{m}} \ .
\end{align}
We now write
\begin{align*}
  \gs  B\ket{\Omega} = A^mB\ket{\Omega} + \norm{B}\cdot
    e^{-\orderof{m}} \ ,
\end{align*}
and consider the expression $A^mB\ket{\Omega}$. By assumption, the
system is frustration free, and therefore every local projection
operator $P_i$ that appears in $A$ leaves $\ket{\Omega}$ invariant:
$P_i\ket{\Omega}=\ket{\Omega}$. We now consider the ``causality
cone'' of projections in $A^m$ that are defined by $B$. These are
simply the projections that are graph-connected to $B$ when all the
projections in $A^m$ are arranged in consecutive $gm$ layers (see
\Fig{fig:light-cone}). The main observation is that all the
projections outside this causality cone commute with $B$, and can
therefore be absorbed by $\ket{\Omega}$. We are therefore left only
with the projections of the causality cone, whose support size
$\ell$ is proportional to $m$. In other words, just as in the LR
bound method, we found an exponentially (in $\ell$) good
approximation to $\gs B$ in the expression $\gs B\ket{\Omega}$ by an
operator which is $\ell$-local. 

This kind of reasoning, with appropriate modifications, is used in
both the 1D area-law and the exponential decay of correlations, that
are presented in the following sections.

%The need to approximate the projection on the ground state 
%by local operators appears over and over again in 
%quantum Hamiltonian complexity, including for example in
%Hastings' proof that adiabatic evolution of 1dim systems with gapped
%Hamiltonians can be simulated efficiently \cite{}. 
%The simple application of the 
%DL to derive the same result might open the way to 
%simplifications as well as extensions of results in quantum 
%Hamiltonian complexity. 

%%%%%%%%%%%%%%%%%%%%%%%%%%%%%%%%%%%%%%%%%%%%%%%%%%%%%%%%%%%%%%%
\section{ The area law in 1D using the detectability lemma}
\label{sec:area}  

Throughout this section, we let $H = \sum Q_i$ be a 2-local
frustration-free 1D Hamiltonian that is made of projections $Q_i$
acting on particles of dimension $d$. Assume that $H$ has a unique
ground state $\ket{\Omega}$ and a spectral gap $\epsilon$, and set
$\delta\EqDef 1-(1+\epsilon/2)^{-1/3}$ in accordance with the
shrinking exponent of the DL, in \Eq{eq:shrink}. We notice that in
the interesting limit $\epsilon \to 0$, we have $\delta \simeq
\epsilon/6$, and generally, for for $\epsilon\in [0,1]$, have
$\epsilon/8 \le \delta \le \epsilon/6$. In order to keep the
presentation simple, we shall assume throughout this section that
$\epsilon \le 1$, and prove the following version of a one
dimensional area law:

\begin{theorem}[Area Law for frustration free Hamiltonians in $1D$]
\label{thm:al} 
  For any contiguous cut along the chain, the
  entanglement entropy of the ground state $\ket{\Omega}$ across the
  cut is bounded by a constant which depends on the dimensionality
  of the particles $d$ and on the spectral gap $\epsilon$;
  specifically,
  \begin{align}
    \label{eq:final-bound}
    S \le \frac{10}{\delta}d^{4/\delta}(\ln d)^2 \ .
  \end{align}
  
\end{theorem}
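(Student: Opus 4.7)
The plan is to follow Hastings' proof architecture for the $1D$ area law, but to substitute the detectability lemma for the combined use of the Lieb-Robinson bound and Fourier analysis. Fix a contiguous cut $L|\bar L$ and let $\lambda_1\ge\lambda_2\ge\cdots$ denote the Schmidt spectrum of $\ket\Omega$ across it; by Fact~\ref{f:rankapprox} the entropy $S=-\sum_j\lambda_j\log\lambda_j$ is controlled once one establishes rapid decay of the tails $\sum_{j>r}\lambda_j$ in $r$. I would establish such tail bounds by exhibiting, for each $r$ in a geometric progression, an explicit state of Schmidt rank $r$ with large overlap with $\ket\Omega$, and then translating those overlaps back into spectral tail bounds via Fact~\ref{f:rankapprox}.

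The main ingredient is an approximate ground-state projector (AGSP) built from the DL. I would set $K\EqDef A^\ell$, where $A=\Pi_{odd}\Pi_{even}$ is the operator of \Lem{lem:detect2} and $\ell$ will be chosen at the end. Iterating \Lem{lem:detect2} gives the shrinkage $\|K|_{\Hi'}\|\le(1-\delta)^\ell$, and since $A$ is a product of two layers of commuting $2$-local projections, at most one projection in each layer straddles the cut; hence the Schmidt rank of $K$ across the cut is at most $d^{4\ell}$. These are the two AGSP parameters that drive the analysis.

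The central step is to bootstrap a sequence of Schmidt-rank-bounded approximations of $\ket\Omega$ by iterated application of $K$. Starting from a low-Schmidt-rank seed state $\ket{\phi_0}$ with initial overlap $\mu_0>0$ with $\ket\Omega$, each application of $K$ multiplies the Schmidt rank across the cut by at most $d^{4\ell}$ while shrinking the component orthogonal to $\ket\Omega$ by $(1-\delta)^\ell$. After $k$ iterations Fact~\ref{f:rankapprox} translates the resulting high overlap into a decay estimate on $\sum_{j>r_k}\lambda_j$ with $r_k$ growing geometrically in $k$. Taking $\ell=\Theta(1/\delta)$ so that the per-iteration shrinkage is bounded away from $1$ while $d^{4\ell}\le d^{4/\delta}$ remains manageable, and then summing the entropy contributions of the resulting geometric blocks of Schmidt indices, should yield the stated bound $S\le\frac{10}{\delta}d^{4/\delta}(\ln d)^2$.

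The principal obstacle, inherited from Hastings' proof, is the seed step: one must guarantee a low-Schmidt-rank state $\ket{\phi_0}$ whose initial overlap $\mu_0$ with $\ket\Omega$ is bounded away from zero, since the AGSP only amplifies overlap that is already present. In Hastings' original argument this required a delicate dynamical localization of the ground state to a window around the cut, driven by the Lieb-Robinson bound. The hope in the DL approach is to exploit instead the transparent causality-cone structure of $A^\ell$ (\Fig{fig:light-cone}) -- in particular the facts that $A^\ell\ket\Omega=\ket\Omega$ and that every projection outside the cone commutes with any fixed boundary operator -- to reproduce the same localization combinatorially, thereby delivering the conceptual simplification advertised in the introduction.
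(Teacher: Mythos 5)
Your proposal correctly reproduces the second half of the paper's argument: the AGSP $A^{\ell}$ built from the DL, the observation that only the single projection straddling the cut raises the Schmidt rank (by $d^2$ per application of $A$, not $d^{4\ell}$ for $A^\ell$ --- the paper gets $d^{2\ell}$), the translation of overlap into spectral tail bounds via Fact~\ref{f:rankapprox}, and the block-by-block entropy summation. This is essentially \Lem{lem:overlaptoarealaw} and \Lem{lem:steps}. However, you have left a genuine gap at exactly the point you flag as the ``principal obstacle'': the existence of a seed product state with a quantifiable overlap with $\ket{\Omega}$ is not a hope to be deferred --- it is \Lem{lem:overlap}, and its proof is the bulk of the work. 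The paper proves it by contradiction: assuming every product state across the cut has overlap below $(1-\delta)^{\ell/4}$, it (i) uses the DL causality cone to build a measurement \emph{confined to a $2\ell$ window} that distinguishes $\rho^{2\ell}$ from $\rho^{\ell}_L\otimes\rho^{\ell}_R$ with probability $1-2(1-\delta)^{\ell/2}$ (\Lem{lem:measurement}); (ii) converts that distinguishability into an entropy deficit $S(\rho^{\ell}_L)+S(\rho^{\ell}_R)-S(\rho^{2\ell})\ge\frac{\delta}{2}\ell-1$ via Lindblad--Uhlmann monotonicity of relative entropy (\Lem{lem:A}); (iii) shows the small-overlap hypothesis survives moving the cut by up to $\ell$ sites at a cost of $d^{\ell}$ (\Cl{cl:cut}); and (iv) recurses on halved windows to force $S(\ell)<0$ when $\ell=d^{4/\delta}$, a contradiction. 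None of steps (ii)--(iv) appears in your sketch, and your causality-cone intuition only covers step (i).

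A secondary but consequential error: you assume the seed overlap $\mu_0$ is ``bounded away from zero.'' It is not --- the paper only obtains $\mu = d^{-\ell_0}(1-\delta)^{\ell_0/4}$ with $\ell_0=d^{4/\delta}$, i.e.\ roughly $d^{-d^{4/\delta}}$. This is harmless because the entropy bound of \Lem{lem:overlaptoarealaw} depends only on $\ln(1/\mu)$, but it means the $d^{4/\delta}$ in \Eq{eq:final-bound} comes from $\ln(1/\mu)$, not (as your sketch suggests) from the Schmidt rank $d^{4\ell}$ of an AGSP with $\ell=\Theta(1/\delta)$. Your accounting of where the final constant comes from would therefore not close.
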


The proof relies on two main lemmas. The first shows that for any
cut along the line, there is a product state $\ket{\phi_1} \otimes
\ket{\phi_2}$ that has a constant inner product with the ground state:

\begin{lem}[Constant overlap with a product state]
\label{lem:overlap} 
  For every cut, there is a product state
  $\ket{\phi_1}\otimes\ket{\phi_2}$ such that
  $|\braket{\phi_1\otimes\phi_2}{\Omega}| \geq \mu\EqDef
  d^{-\ell}(1-\delta)^{\ell_0/4}$, with $\ell_0\EqDef d^{4/\delta}$.
\end{lem}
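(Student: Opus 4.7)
The plan is to build an approximate ground-space projector (AGSP) out of the detectability operator $A$, apply it to an initial product state, and then read off a new product state from the Schmidt decomposition of the result, using \Fact{f:rankapprox} to translate a Schmidt-rank bound into an overlap bound.

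First I would set $m = \ell_0/4$ and define $K \EqDef A^m$. Iterating \Lem{lem:detect2} gives $K\ket{\Omega}=\ket{\Omega}$ together with $\norm{K|_{\Hi'}} \le \Delta$ for $\Delta \EqDef (1-\delta)^{\ell_0/4}$, so $K$ is an AGSP with shrinking factor $\Delta$. Crucially, each application of $A=\Pi_{odd}\Pi_{even}$ contains only a constant number of projections $P_i$ whose support crosses the cut (at most one from $\Pi_{odd}$ and one from $\Pi_{even}$), while the remaining projections factor as a tensor product across the cut and do not raise the Schmidt rank; each crossing projection contributes a factor of at most $d^2$. Consequently, $K$ has Schmidt rank $D \le d^{4m} = d^{\ell_0}$ across the cut.

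Next I would choose an initial product state $\ket{\phi_0}$ with overlap $\alpha_0 \EqDef |\braket{\Omega}{\phi_0}|$ with the ground state, and consider $\ket{\chi} \EqDef K\ket{\phi_0}$. Since $\ket{\phi_0}$ has Schmidt rank $1$, $\ket{\chi}$ has Schmidt rank at most $D$, and since $K$ is an AGSP, $\ket{\chi} = \alpha_0 \ket{\Omega} + \ket{\eta}$ with $\ket{\eta}\perp\ket{\Omega}$ and $\norm{\eta}\le\Delta$. Applying \Fact{f:rankapprox} to $\ket{\chi}$, its top Schmidt vector is a product state $\ket{\phi_1}\otimes\ket{\phi_2}$ whose overlap with $\ket{\chi}/\norm{\chi}$ is at least $1/\sqrt{D}$. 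Combining this with the decomposition of $\ket{\chi}$ and the identity $\braket{\Omega}{\chi}=\alpha_0$ (which uses that $A^\dagger$ also fixes $\ket{\Omega}$) yields a lower bound on $|\braket{\phi_1\otimes\phi_2}{\Omega}|$ of the form $\mu = d^{-\ell_0}(1-\delta)^{\ell_0/4} = \Delta/D$.

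The main obstacle is the quantitative bookkeeping. The standard AGSP-to-product-state argument giving $\lambda_1 \geq 1/(2D)$ requires $D\Delta^2 \le 1/2$, which does not hold here since $\delta$ is only $\orderof{1}$ and $D = d^{\ell_0}$ is already enormous. So rather than inverting the AGSP lemma in its tight form, the bound must emerge from carefully tracking the additive $\Delta$ contribution through the triangle inequality; this gives the substantially weaker bound $\mu = \Delta/D$. This is still good enough for \Thm{thm:al}, because that theorem only depends on $\ln(1/\mu)$, which is $\orderof{\ell_0 \ln d}$. The second subtle point is making sure $\ket{\phi_0}$ can be chosen with $\alpha_0$ large enough to make the overlap bound nontrivial; a natural way is to take $\ket{\phi_0}$ to be the top product-state approximation to $\ket{\Omega}$ itself, which already satisfies $\alpha_0 = \alpha_1$, reducing the statement to a consistency check on $\alpha_1$.
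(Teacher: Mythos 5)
There is a genuine gap here, and it is fatal to the approach rather than a matter of bookkeeping. Your argument is circular: to extract anything from $\ket{\chi}=K\ket{\phi_0}=\alpha_0\ket{\Omega}+\ket{\eta}$ you need the ground-state component to dominate the error component, i.e.\ you need $\alpha_0$ to already be large compared to $\sqrt{D}\Delta$. Concretely, the top Schmidt vector $\ket{p}$ of $\ket{\chi}$ satisfies $\alpha_0|\braket{p}{\Omega}|\ge |\braket{p}{\chi}|-\norm{\eta}\ge \norm{\chi}/\sqrt{D}-\Delta$, and since $\norm{\chi}\le\sqrt{\alpha_0^2+\Delta^2}$, the right-hand side is positive only when $\alpha_0\gtrsim\sqrt{D}\,\Delta$. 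But a lower bound on $\alpha_0$ --- the overlap of \emph{some} product state with $\ket{\Omega}$ --- is exactly what \Lem{lem:overlap} is supposed to establish; a priori $\alpha_1$ could be as small as $d^{-n/2}$. Your fallback of taking $\ket{\phi_0}$ to be the optimal product state and treating the result as ``a consistency check on $\alpha_1$'' gives the self-consistency inequality $\alpha_1^2\ge 1/D-\Delta^2$, which is non-vacuous only when $D\Delta^2<1$. As you yourself observe, here $D=d^{\Theta(\ell_0)}$ while $\Delta^2=(1-\delta)^{\ell_0/2}$, so $D\Delta^2$ is astronomically large (each application of $A$ multiplies the rank by $d^2$ but shrinks the error only by $(1-\delta)$, and $\ln d>\delta/2$ always). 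In that regime the inequality is satisfied by every $\alpha_1\in(0,1]$ and yields no lower bound; there is no chain of triangle inequalities that rescues a bound of the form $\mu=\Delta/D$, because when $\alpha_0\ll\sqrt{D}\Delta$ the state $\ket{\chi}$ is essentially orthogonal to $\ket{\Omega}$ and its Schmidt vectors carry no information about the ground state. (The AGSP-bootstrap you describe is indeed how later 1D area-law proofs proceed, but only after constructing a far better AGSP --- via Chebyshev polynomials and coarse-graining --- for which $D\Delta^2\le 1/2$ genuinely holds; the bare DL operator $A^m$ does not have that property.)

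The paper proves the lemma by an entirely different, contradiction-based route. Assume every product state across the cut has overlap below $(1-\delta)^{\ell/4}$. Then \Lem{lem:measurement} uses the DL to build a measurement confined to a $2\ell$-window that distinguishes $\rho^{2\ell}$ from $\rho^\ell_L\otimes\rho^\ell_R$ with probability $1-2(1-\delta)^{\ell/2}$; \Lem{lem:A} converts this, via the Lindblad--Uhlmann monotonicity of relative entropy, into the entropy gap $S(\rho^\ell_L)+S(\rho^\ell_R)-S(\rho^{2\ell})\ge\frac{\delta}{2}\ell-1$. Claim~\ref{cl:cut} transfers the small-overlap hypothesis to nearby cuts (at a cost of $d^\ell$), so the inequality $S(2\ell)\le 2S(\ell)-\frac{\delta}{2}\ell+1$ can be applied recursively down to single particles, giving $S(\ell)\le\ell(\ln d+1)-\frac{\delta}{2}\ell\log_2\ell$, which is negative for $\ell=d^{4/\delta}$ --- a contradiction. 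If you want to keep an AGSP-flavored proof you would need to substantially strengthen the operator $K$; with the tools available in this paper, the entropy-recursion argument is the one that closes.
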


The second lemma shows that if there exists a product state with
a constant overlap with the ground state $\ket{\Omega}$, then
$\ket{\Omega}$ has finite entanglement entropy:

\begin{lem}[Constant overlap with a product state implies finite entropy]
\label{lem:overlaptoarealaw} 
  If for some cut there exists a product state $\ket{\phi_1}\otimes\ket{\phi_2}$
  such that $|\braket{\phi_1\otimes\phi_2}{\Omega}|\ge \mu$, then the
  entanglement entropy of $\ket{\Omega}$ across the cut is bounded by
  \begin{align}
  \label{eq:entropy-ub}
    S  \le \frac{3}{\delta}(\ln \frac{1}{\mu^2\delta} + 2)\ln d \ .
  \end{align}
\end{lem}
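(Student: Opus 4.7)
The plan is to use the DL operator $A=\Pi_{odd}\Pi_{even}$ to bootstrap the constant-overlap product state into a low-Schmidt-rank approximation of $\ket{\Omega}$ parametrized by the number of iterations $\ell$, and then to assemble this one-parameter family of approximations into the desired entropy bound through a carefully chosen dyadic block decomposition of the Schmidt spectrum.

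First I would decompose $\ket{\phi_1\otimes\phi_2}=c\ket{\Omega}+\ket{\perp}$ with $|c|\ge\mu$ and $\ket{\perp}\in\Hi'$, apply $A$ iteratively $\ell$ times, and invoke \Lem{lem:detect2} to obtain
\begin{equation*}
  A^\ell\ket{\phi_1\otimes\phi_2}=c\ket{\Omega}+A^\ell\ket{\perp},\qquad\norm{A^\ell\ket{\perp}}\le(1-\delta)^\ell.
\end{equation*}
In parallel I would track the Schmidt rank of $A^\ell\ket{\phi_1\otimes\phi_2}$ across the prescribed cut: the initial product state has Schmidt rank $1$, and among the commuting projections constituting $A$ the only one that can modify the Schmidt rank is the single nearest-neighbor projection $P_k$ that straddles the cut, whose operator Schmidt rank is at most $d^2$. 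Thus the Schmidt rank of $A^\ell\ket{\phi_1\otimes\phi_2}$ is $\le d^{2\ell}$. Normalizing this state and applying Fact~\ref{f:rankapprox} to its overlap with $\ket{\Omega}$ then yields, for every integer $\ell\ge 0$, the Schmidt-tail bound
\begin{equation*}
  \sum_{j>d^{2\ell}}\lambda_j\ \le\ \frac{(1-\delta)^{2\ell}}{\mu^2},
\end{equation*}
where $\lambda_1\ge\lambda_2\ge\cdots$ are the eigenvalues of the reduced density matrix of $\ket{\Omega}$ across the cut.

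The final and most delicate step converts this family of tail bounds into a bound on $S=-\sum_j\lambda_j\ln\lambda_j$. The crucial choice is to select cut-off levels $\ell_0<\ell_1<\cdots$ so that the tail mass decays geometrically from block to block, rather than letting the block sizes grow by a fixed factor per level. Concretely, I would set $\ell_0=\lceil\ln(1/\mu^2)/(2\delta)\rceil$ (the first scale at which the tail bound becomes non-trivial) and $\ell_k-\ell_{k-1}=\Theta(1/\delta)$ for $k\ge 1$, so that block $B_k=\{d^{2\ell_{k-1}}+1,\dots,d^{2\ell_k}\}$ carries mass $p_k\le 2^{-(k-1)}$ and has size $|B_k|\le d^{2\ell_k}$. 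By concavity of $-x\ln x$, block $B_k$ contributes at most $p_k\ln(|B_k|/p_k)=O(2^{-k}\cdot\ell_k\ln d)$; summing geometrically over $k\ge 1$ and adding the coarse contribution from $B_0$ of order $\log d^{2\ell_0}=O(\ln(1/\mu)\ln d/\delta)$ yields \Eq{eq:entropy-ub}. The principal obstacle is exactly this entropy assembly: a naive choice $r_k=d^{2k}$ with block sizes growing by a fixed factor per level rather than in lock-step with the $(1-\delta)^{-2}$ tail shrinkage produces an extra $1/\delta$ factor and gives only $O(\ln^2(1/\mu)\ln d/\delta^2)$, which is too weak; matching $\ell_k-\ell_{k-1}$ to the tail decay rate is what replaces $1/\delta^2$ by $1/\delta$ and produces the additive $\ln(1/\delta)$ inside the outer logarithm from summing over the warm-up blocks.
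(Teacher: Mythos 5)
Your proposal is correct and follows the paper's proof almost exactly in its core: the decomposition $\ket{\phi_1\otimes\phi_2}=c\ket{\Omega}+\ket{\perp}$, the $\ell$-fold application of $A$ with the DL controlling $\norm{A^\ell\ket{\perp}}$, the observation that only the single projection straddling the cut can raise the Schmidt rank (by at most $d^2$ per round), and the use of Fact~\ref{f:rankapprox} to convert the family of rank-$d^{2\ell}$ approximants into the tail bounds $\sum_{j>d^{2\ell}}\lambda_j\le\mu^{-2}(1-\delta)^{2\ell}$ are all identical to the paper's argument. The only divergence is the final entropy-assembly step: the paper isolates this as \Lem{lem:steps} and uses fixed-ratio index blocks $\{D^\ell+1,\dots,D^{\ell+1}\}$, whereas you choose cutoffs $\ell_k$ spaced by $\Theta(1/\delta)$ so that each block's mass halves. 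Both work, but your stated motivation for the nonstandard spacing is off: you claim the fixed-ratio choice ``produces an extra $1/\delta$ factor,'' yet the paper uses exactly that choice and still obtains the $O(\ell_0\ln D)$ bound, because the loss you describe only arises if one bounds the warm-up blocks one at a time by $p_\ell\ln(|B_\ell|/p_\ell)$ with $p_\ell\le 1$; the paper instead bounds all blocks below $\ell_0$ \emph{collectively} by $\ln(D^{\ell_0})=\ell_0\ln D$ (total mass at most one spread over at most $D^{\ell_0}$ weights), and the blocks above $\ell_0$ contribute a convergent geometric sum of the same order. So your fix is valid but unnecessary; the real point in either version is to avoid summing trivial per-block mass bounds over the $\Theta(\ln(1/\mu)/\delta)$ warm-up scales. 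One further small caveat: you leave the constants in $\Theta(\cdot)$ form, so to literally obtain \Eq{eq:entropy-ub} (including the $\ln(1/\delta)$ inside the logarithm, which in the paper enters through the definition of $\ell_0$ in \Lem{lem:steps}) you would still need to pin down the constants in your block spacing and verify the arithmetic.
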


Theorem \ref{thm:al} then follows easily by combining the two lemmas
and using the facts that $\delta\ge \mu$ and $\ln \frac{1}{\mu} \ge
6$.  We prove \Lem{lem:overlap} in \Sec{sec:overlap} and
\Lem{lem:overlaptoarealaw} in \Sec{sec:overlaptoarealaw}.

%
%--------------------------------------------------------------------------
\subsection{Constant overlap implies finite entropy 
  (proof of \Lem{lem:overlaptoarealaw})} 
\label{sec:overlaptoarealaw}

In this section we prove Lemma \ref{lem:overlaptoarealaw}.  The DL
is clearly the right tool for the task, since it provides a ``local''
operator that can be repeatedly applied to the promised product
state $\ket{\phi_1}\otimes\ket{\phi_2}$ without increasing its
entanglement rank much, while exponentially decreasing its distance 
from the ground state. 

The only thing that is not entirely clear is how to get a constant
bound on the entanglement entropy of the ground state, since a
straightforward argument would mean applying the operator
non-constant number of times to get arbitrarily close to the ground
state. The key is to observe that after $\ell$ applications of the
DL we get a state with a bounded Schmidt rank that is close to the
ground state, and by Fact~\ref{f:rankapprox}, this gives us a bound
on the sum of the largest $d^{2\ell}$ Schmidt coefficients \emph{of
the ground state}. With these bounds we can find a pessimistic
\emph{constant} upper bound on the entanglement entropy. We can now
proceed to the more detailed proof. 

Consider then a cut in the line between the particles $i_0$ and
$i_0+1$, and let $Q_{i_0}$ be the local term in $H$ that involves
$i_0, i_0+1$. Assume that along that cut, the product state
$\ket{\phi_0}\EqDef \ket{\phi_1}\otimes\ket{\phi_2}$ has a constant
projection $\mu$ on the ground state $\ket{\Omega}$:
\begin{align}
 \ket{\phi_0} = \mu\ket{\Omega} + \ket{w} \ ,
\end{align}
where $\ket{w}\in \Hi'$, and $\norm{w} = (1-\mu^2)^{1/2}<1$.
We now apply the operator DL operator $A$  $\ell$ times on
$\ket{\phi_0}$. We obtain
\begin{align*}
 A^\ell\ket{\phi_0} = \mu\ket{\Omega} + \ket{w^{(\ell)}} \ ,
\end{align*}
where $\ket{w^{(\ell)}}\in \Hi'$ and
$\norm{w^{(\ell)}}\le (1-\delta)^\ell$ \ .
Let $\ket{v_\ell}$ be the normalized version of $A^\ell\ket{\phi_0}$. Then 
\begin{align}
\label{eq:vl-inprod}
 |\braket{v_\ell}{\Omega}| \ge 
   \frac{\mu}{\sqrt{\mu^2 + (1-\delta)^{2\ell}}}
   =1-\orderof{(1-\delta)^{2\ell}} \ .
\end{align}
This means $\ket{v_\ell}$ are exponentially close to the ground
state, as a function of $\ell$. 

How entangled are those states? 
We notice that at each application of $A$, the entanglement
rank of the state can only increase by a multiplicative factor of
$d^2$: for every $i\neq i_0$, the projection term $P_i$ in $A$
works entirely left to the cut or entirely right to the cut, thereby
not increasing the Schmidt rank of the state. The only projection in
$A$ that may increase the rank is $P_{i_0}$, and as it is a 2-local
projection that works on $d$-dimensional particles, it can at most
increase it by a factor of $d^2$.\footnote{Given a vector $\ket{v}=
\sum_{j=1}^r \ket{L}_j \otimes \ket{R_j}$ of Schmidt rank $r$ and
$i<i_0$, $P_i\ket{v}=\sum_{j=1}^r
\big(P_i\ket{L_j}\big)\otimes\ket{R_j}$ and thus has Schmidt rank
bounded by $r$ (the symmetric argument shows the same result for
$i>i_0$).  For $i=i_0$, we can decompose $P_{i_0}= \sum_{k=1}^{d^2}
X_k \otimes Y_k$ with $X_k$ acting on the $i_0$'th particle and
$Y_k$ acting on the $(i_0+1)$'th particle.  Consequently
$P_{i_0}\ket{v} = \sum_{j=1}^r \sum_{k=1}^{d^2}
(X_k\ket{L_j})\otimes (Y_k\ket{R_j})$ has Schmidt rank no larger
than $rd^2$.} Consequently, the Schmidt rank of $\ket{v_\ell}$
is at most $d^{2\ell}$.

We have therefore obtained a family of states $\{ \ket{v_\ell} \}$
with Schmidt ranks bounded by $d^{2\ell}$, which are closer and
closer to $\ket{\Omega}$. Then using Fact~\ref{f:rankapprox},
together with \Eq{eq:vl-inprod}, it follows that the eigenvalues of
the reduced density matrix of $\ket{\Omega}$ along the cut,
$\lambda_1\ge \lambda_2\ge\ldots$, must satisfy 
\begin{align}
  \frac{\mu}{\sqrt{\mu^2 + (1-\delta)^{2\ell}}} \le
  |\braket{v_\ell}{\Omega}| \le \left(\sum_{j=1}^{d^{2\ell}}
  \lambda_j\right)^{1/2} \ ,
\end{align}
which implies the following series of
inequalities:
\begin{align}
 \text{For every $\ell\ge 1$:} \quad
 \sum_{j\ge d^{2\ell}+1} \!\!\lambda_j
   \le \frac{1}{\mu^2}(1-\delta)^{2\ell}\ .
\label{eq:schmidtub}
\end{align}

From here, the desired upper bound on the entropy can be deduced by
choosing the distribution of maximal entropy which still satisfies
the inequalities in \Eq{eq:dist-constraints}. The following lemma,
whose proof can be found in \App{sec:S-upperbound}, gives one such
bound:
\begin{lem}
\label{lem:steps}
 Consider a probability distribution $\{\lambda_j\}$ whose values
 are ordered in a non-increasing fashion, $\lambda_1\ge
 \lambda_2\ge\ldots$, and let $D\ge 2$ be an integer and $K\ge 1,
 0<\theta<1$ some constants such that
 \begin{align}
 \label{eq:dist-constraints}
   \text{for every $\ell\ge 1$:} \quad
   \sum_{j\ge D^\ell+1} \!\!\lambda_j
   \leq K\theta^\ell \ .
 \end{align}
 Then the entropy of $\{\lambda_j\}$ is upper bounded by
 \begin{align}
 \label{eq:S-bound}
   S \le  
     3\left[\frac{\ln \frac{K}{1-\theta} + 1}{\ln(1/\theta)} +
     2\right]\ln D \ .
 \end{align}
\end{lem}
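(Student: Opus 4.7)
The plan is to partition $\mathbb{N}$ into the dyadic blocks $B_0 = \{1,\ldots,D\}$ and $B_\ell = \{D^\ell+1,\ldots,D^{\ell+1}\}$ for $\ell \ge 1$, which have sizes $|B_\ell| \le D^{\ell+1}$, and to bound the entropy block-by-block. Let $p_\ell \EqDef \sum_{j \in B_\ell}\lambda_j$ denote the block masses. The hypothesis \Eq{eq:dist-constraints} then immediately gives $p_\ell \le K\theta^\ell$ for $\ell \ge 1$, together with the normalization $\sum_\ell p_\ell = 1$.

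Within each block, the contribution to $S$ is maximized, for fixed $p_\ell$, by spreading the mass uniformly. This yields the standard chain-rule-type bound
\begin{align*}
  S \;\le\; \sum_{\ell\ge 0}\big[p_\ell\ln|B_\ell| + p_\ell\ln(1/p_\ell)\big]
  \;\le\; \ln D\cdot\sum_{\ell\ge 0}(\ell+1)p_\ell \;-\;\sum_{\ell\ge 0}p_\ell\ln p_\ell\ .
\end{align*}
Introducing the natural length scale $L^*\EqDef (\ln(K/(1-\theta))+1)/\ln(1/\theta)$, the first sum is controlled via Abel summation: $\sum_\ell \ell p_\ell = \sum_{\ell\ge 1}m_\ell$ where $m_\ell = \sum_{j>D^\ell}\lambda_j \le \min(1,K\theta^\ell)$, and splitting this tail sum at the cutoff where $K\theta^\ell\simeq 1$ yields a contribution of at most $(L^* + O(1))\ln D$, which is already of the claimed form.

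The main technical point is bounding the ``outer'' entropy $-\sum_\ell p_\ell\ln p_\ell$. I would split this sum at $\ell = L^*$: for $\ell < L^*$ the crude bound $-p_\ell\ln p_\ell \le 1/e$ contributes $O(L^*)$ in total; for $\ell \ge L^*$ both $p_\ell$ and $K\theta^\ell$ lie in $(0,1/e)$ (by the very definition of $L^*$, which is exactly the index at which $K\theta^\ell/(1-\theta)$ first drops below $1/e$), where $x\mapsto-x\ln x$ is monotonically increasing. This lets us upgrade to $-p_\ell\ln p_\ell \le -K\theta^\ell\ln(K\theta^\ell) \le K\theta^\ell\cdot\ell\ln(1/\theta)$, after which the resulting geometric-times-linear tail $\sum_{\ell\ge L^*}\ell\theta^\ell$ evaluates to an $O(1)$ constant.

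The main obstacle is precisely the non-monotonicity of $x\mapsto -x\ln x$ on $(0,1)$: the upgrade above fails below the peak at $1/e$, which is exactly why $L^*$ must appear explicitly as a threshold in the bound, and why the two regimes ($\ell<L^*$ handled trivially, $\ell\ge L^*$ handled via monotonicity) must be combined separately. The resulting slack from these two regimes, together with the slack from bounding $|B_\ell|$ by $D^{\ell+1}$ rather than $D^{\ell+1}-D^\ell$, produces the overall factor $3$ and the additive $+2$ in the stated inequality. The final step is a routine arithmetic verification that the constants line up, which I would relegate to an appendix.
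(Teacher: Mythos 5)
Your proposal is correct and follows essentially the same route as the paper's proof: the same dyadic blocks, uniformization of the mass within each block, a threshold ($L^*$, the paper's $\ell_0$) defined by where $K\theta^\ell/(1-\theta)$ drops below $1/e$, a crude bound below the threshold, and monotonicity of $x\mapsto -x\ln x$ on $(0,1/e]$ together with geometric sums above it. The only differences are bookkeeping --- you split the entropy into ``inner'' and ``outer'' parts via the chain rule and handle $\sum_\ell \ell p_\ell$ by Abel summation, whereas the paper bounds the $\lambda_j$ pointwise --- and the arithmetic you defer does close, with the caveat that your two ``$\orderof{1}$'' tails are really $\orderof{L^*+1}$ (controlling $1/(1-\theta)$ by $\ell_0$ is exactly the ``standard calculus'' step in the paper's appendix), which the factor $3$ and the $+2$ are there to absorb.
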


Substituting $D=d^2, \theta=(1-\delta)^2$ and $K=\mu^{-2}$ gives
\begin{align*}
  S \le 6\left[\frac{\ln \frac{1}{\mu^2[1-(1-\delta)^2]} +
  1}{2\ln[1/(1-\delta)]} + 2\right]\ln d \ .
\end{align*}
But $1-(1-\delta)^2 \ge \delta$ and $\ln[1/(1-\delta)] \ge \delta$,
and so
\begin{align*}
  S \le 6\left[\frac{\ln \frac{1}{\mu^2\delta} + 1}{2\delta} 
    + 2\right]\ln d
  \le \frac{3}{\delta}(\ln \frac{1}{\mu^2\delta} + 2)\ln d \ ,
\end{align*}
where the last inequality follows from the assumption that $\delta
\le \epsilon/6 \le 1/6$.

%---------------------------------------------------------------------------------
\subsection{A product state having constant overlap with $\ket{\Omega}$
  (proof of \Lem{lem:overlap})} 
\label{sec:overlap}

The obvious candidate for a tensor product state with a constant
overlap with $\ket{\Omega}$ is the mixed state $\rho_L \otimes
\rho_R$, where $\rho_L$ is the reduced density matrix of
$\ket{\Omega}$ to the left of the cut, and $\rho_R$ is the reduced
density matrix to the right. 

Let us assume for contradiction that the overlap between
$\ket{\Omega}$ and $\rho_L \otimes \rho_R$, and in fact with any
tensor product state along a certain cut, is less than
$(1-\delta)^{\ell/4}$ for some sufficiently large constant $\ell$.  
If the overlap is small, then there is a measurement that
distinguishes $\ket{\Omega}$ from $\rho_L \otimes \rho_R$ with
probability of at least $1-(1-\delta)^{\ell/2}$; this is simply the projection
on the ground state, $\gs $. 

The challenge is to show that there is a local such measurement, i.e., a 
measurement confined to a local window, which distinguishes these
two states almost as well. Using the DL we shall now find such
local measurement that distinguishes with a slightly worse probability
$1-2(1-\delta)^{\ell/2}$.

Let us denote by $\rho ^{\ell}_L$ (respectively $\rho ^{\ell}_R$)
the reduced density matrix of $\ket{\Omega}$ restricted to the
$\ell$ particles to the left (respectively right) of the cut. Also,
let $\rho ^{2\ell}$ be $\rho=\ket{\Omega}\bra{\Omega}$ restricted to
the $2 \ell$ particles, $\ell$ on each side of the cut. We refer to
the state $\rho ^{\ell}_L\otimes\rho ^{\ell}_R$ as the
``disentangled'' version of the state $\rho ^{2\ell}$.  The
following lemma shows that under the assumption that $\ket{\Omega}$
has low overlap with every product state $\ket{\phi_1} \otimes
\ket{\phi_2}$ (along a given cut), there exists a measurement
confined to the window of $2\ell$ particles around the cut, that
with high probability distinguishes $\rho ^{2\ell}$ from $\rho
^{\ell}_L \otimes \rho ^{\ell} _R$. 

\begin{lem}[Existence of a distinguishing measurement]
\label{lem:measurement}   
  Assuming that the overlap of the ground state with any product
  state satisfies $|\braket{\phi_1\otimes\phi_2}{\Omega}|\leq
  (1-\delta)^{\ell/4}$, there is a measurement that distinguishes
  $\rho^{2\ell}$ from $\rho^\ell_L \otimes \rho^\ell_R$ with
  probability $1 - 2(1-\delta)^{\ell/2}$.
\end{lem}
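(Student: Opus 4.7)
The plan is to use the ground-state projector $\gs $ as the ideal distinguisher and to replace it, via the detectability lemma plus a causality-cone argument, by a measurement supported on the $2\ell$-window $W$ around the cut.

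First I choose a concrete product state across the cut. Let $\ket{\Omega_L}$ be any purification of $\rho^\ell_L$ living on the particles to the left of the cut and $\ket{\Omega_R}$ any purification of $\rho^\ell_R$ on the particles to the right; then $\ket{\Omega_L}\otimes \ket{\Omega_R}$ is a product state whose reduction to $W$ is exactly $\rho^\ell_L\otimes \rho^\ell_R$, and the hypothesis of the lemma gives $|\braket{\Omega_L\otimes \Omega_R}{\Omega}|\le (1-\delta)^{\ell/4}$. Writing $\ket{\Omega_L\otimes \Omega_R}=c\ket{\Omega}+\ket{w}$ with $\ket{w}\in \Hi'$, we thus have $|c|^2\le (1-\delta)^{\ell/2}$.

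Next I apply $t\EqDef \ell/4$ iterations of the DL operator. Since $\ket{\Omega}$ is fixed by both $A$ and $A^\dagger$, the vector $A^t\ket{w}$ stays orthogonal to $\ket{\Omega}$, and the DL yields $\|A^t w\|\le (1-\delta)^t$; hence
\begin{align*}
  \|A^t\ket{\Omega_L\otimes \Omega_R}\|^2 = |c|^2 + \|A^t w\|^2 \le 2(1-\delta)^{\ell/2},
\end{align*}
while trivially $A^t\ket{\Omega}=\ket{\Omega}$. Each layer of $A$ spreads support by $\orderof{1}$, so for $t=\ell/4$ the causality cone around the cut under $A^t$ fits inside $W$. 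Every projection $P_j$ in $A^t$ outside this cone acts on particles strictly on one side of the cut and shares no particle with any projection inside the cone; hence it commutes with all cone operators and can be slid to act directly on $\ket{\Omega_L}$ or $\ket{\Omega_R}$, which it fixes. Keeping only the cone projections gives a product-of-projections operator $B$, supported inside $W$, with $B\ket{\Omega_L\otimes \Omega_R}=A^t\ket{\Omega_L\otimes \Omega_R}$ and $B\ket{\Omega}=\ket{\Omega}$.

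Finally I set $M_W\EqDef B^\dagger B$. This is a POVM element on $W$ ($0\le M_W\le \Id$ because $B$ is a product of projections), and
\begin{align*}
  \Tr(M_W\rho^{2\ell}) = \|B\ket{\Omega}\|^2 = 1, \qquad
  \Tr\bigl(M_W(\rho^\ell_L\otimes \rho^\ell_R)\bigr) = \|B\ket{\Omega_L\otimes \Omega_R}\|^2\le 2(1-\delta)^{\ell/2},
\end{align*}
giving the advertised distinguishing probability. The main obstacle I anticipate is the causality-cone bookkeeping in the third step: one must verify carefully that every ``far'' projection genuinely commutes past all of the cone operations in order to reach $\ket{\Omega_L}$ or $\ket{\Omega_R}$ (leveraging 2-locality and the fact that $\ket{\Omega_L},\ket{\Omega_R}$ are zero-energy states of all local terms supported on their respective sides), and one must balance $t$ so that the cone fits inside $W$ while simultaneously the DL shrinkage $(1-\delta)^t$ matches the overlap bound $(1-\delta)^{\ell/2}$.
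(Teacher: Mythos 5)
Your overall strategy---use the detectability lemma plus a causality-cone truncation to localize the ground-space projector---is the same as the paper's, but there is a genuine gap in your absorption step, and it traces back to the choice of test state. An arbitrary purification $\ket{\Omega_L}$ of $\rho^\ell_L$ is only guaranteed to reproduce the ground state's marginal on the $\ell$ particles adjacent to the cut; on the remaining left particles (the purifying ancilla) it is completely unconstrained. A projection $P_j=\Id-Q_j$ appearing in $A^t$ therefore has no reason to satisfy $P_j\ket{\Omega_L}=\ket{\Omega_L}$ unless $Q_j$ is supported \emph{inside} the $\ell$-window: only there does $\bra{\Omega_L}Q_j\ket{\Omega_L}=\Tr(Q_j\rho^\ell_L)=\bra{\Omega}Q_j\ket{\Omega}=0$ hold. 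For the non-cone projections that lie outside, or straddle the boundary of, the two $\ell$-windows, the absorption fails, so the identity $B\ket{\Omega_L\otimes\Omega_R}=A^t\ket{\Omega_L\otimes\Omega_R}$ is false in general, and your bound on $\Tr\left(M_W(\rho^\ell_L\otimes\rho^\ell_R)\right)$ does not follow. Nor can you simply discard those projections and claim an inequality: removing commuting projections from the product does not decrease the norm of the resulting vector, so the estimate goes in the wrong direction.

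The repair is essentially what the paper does: run the argument on the \emph{full} half-chain marginals $\rho_L\otimes\rho_R$ rather than on a purification of $\rho^\ell_L\otimes\rho^\ell_R$. Decomposing $\rho_L\otimes\rho_R$ into rank-one product states across the cut, each factor lies in the support of $\rho_L$ (resp.\ $\rho_R$), and since $Q_j\rho_L=\Tr_R(Q_j\ketbra{\Omega}{\Omega})=0$ for \emph{every} term supported on the left half, every one-sided projection of $A^t$ fixes these states and the cone truncation goes through; the overlap hypothesis still applies because these are product states across the cut. Since the resulting measurement is supported on the $2\ell$-window $W$, one finishes with $\Tr(M_W(\rho^\ell_L\otimes\rho^\ell_R))=\Tr(M_W(\rho_L\otimes\rho_R))$, which is the quantity you actually need to bound. (The paper takes the measurement to be the projection $\Pi$ onto the common ground space of all terms in the window and shows $\Tr(\Pi A^{\ell/2}(\rho_L\otimes\rho_R))=\Tr(\Pi(\rho_L\otimes\rho_R))$ via a pyramid decomposition $A^{\ell/2}=A_MA_LA_R$; your cone operator $M_W=B^\dagger B$ would serve equally well once the test state is corrected. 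Your bookkeeping $t=\ell/4$, giving $2t=\ell/2$ layers whose cone half-width stays below $\ell$, is fine.)
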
 

The DL ensures that by applying the layers one by one, we converge
to the projection on the ground state quickly, and it is this
projection that is exactly the distinguishing measurement we want to
approximate.  We can thus apply $A$ only $\ell/2$ times,
approximating the projection on the ground space; now, following the
intuition explained in the introduction (and in the example of
\Sec{sec:LR}), only the causality cone of the cut should be used in this
measurement, and the rest of the operators in those layers are
swallowed by the state being measured; this amounts to a measurement
which is restricted to the $-\ell,\ell$ interval and still
distinguishes well enough. The detailed proof can be found in the
appendix.  

The fact that such a measurement exists, distinguishing the original
state confined to the $2\ell$ window from its ``disentangled''
version, with high probability, must somehow indicate that there is
a lot of entanglement along the cut, whose disentanglement caused
this distinguishability.  This can be made precise using an
information-theoretical argument: 

\begin{lem}[Distinguishing measurement implies large difference in
entropies]
\label{lem:A}
  If there is a measurement that distinguishes $\rho ^{2\ell}$ from
  $\rho^{\ell}_L\otimes\rho^{\ell}_R$ with probability of at least
  $1 - 2(1-\delta)^{\ell/2}$, then
  \begin{align*}
    S(\rho^{\ell}_L) + S(\rho^{\ell}_R) 
      - S( \rho^{2\ell} ) \ge \frac{\delta}{2}\ell-1 \ .
  \end{align*}
\end{lem}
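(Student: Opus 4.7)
\textbf{Proof plan for \Lem{lem:A}.} The plan is to recognize the left-hand side as the quantum mutual information and then exploit the fact that mutual information equals a relative entropy, which is non-increasing under any measurement.

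First, I would rewrite
\begin{align*}
  S(\rho^\ell_L) + S(\rho^\ell_R) - S(\rho^{2\ell})
    = S\!\left(\rho^{2\ell}\,\big\|\,\rho^\ell_L\otimes\rho^\ell_R\right),
\end{align*}
which is the standard identity for the quantum mutual information between the two halves of the $2\ell$-particle window in the state $\rho^{2\ell}$. So the task becomes to lower bound this relative entropy.

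Next I would invoke the monotonicity of relative entropy under quantum operations (here, a two-outcome measurement). By hypothesis, there is a POVM element $0\le M\le \Id$ on the $2\ell$-particle window with
\begin{align*}
  \Tr(M\rho^{2\ell}) \ge 1-\eta, \qquad
  \Tr\!\big(M(\rho^\ell_L\otimes\rho^\ell_R)\big) \le \eta,
\end{align*}
where $\eta \EqDef 2(1-\delta)^{\ell/2}$. Applying this binary measurement to both sides produces two Bernoulli distributions, and monotonicity gives
\begin{align*}
  S\!\left(\rho^{2\ell}\,\big\|\,\rho^\ell_L\otimes\rho^\ell_R\right)
    \ge D_{\mathrm{cl}}\big((1-\eta,\eta)\,\|\,(p,1-p)\big)
\end{align*}
for the induced $p\le \eta$. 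Since classical relative entropy is monotone in the second argument when the first is fixed and the outcomes are ordered as above, this is bounded below by dropping the $\eta\ln(\eta/(1-p))$ term (which is non-positive once $\eta\le 1/2$) and by $p\le \eta$, yielding
\begin{align*}
  S\!\left(\rho^{2\ell}\,\big\|\,\rho^\ell_L\otimes\rho^\ell_R\right)
    \ge (1-\eta)\ln\frac{1-\eta}{\eta} - H(\eta)
    \ge \ln\frac{1}{\eta} - 1,
\end{align*}
where the last step uses $(1-\eta)\ln(1-\eta)\ge -\ln 2$ and absorbs the small constants into the ``$-1$''.

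Finally, I would plug in $\eta = 2(1-\delta)^{\ell/2}$ and use the elementary inequality $-\ln(1-\delta)\ge \delta$ to obtain
\begin{align*}
  \ln\frac{1}{\eta} = -\ln 2 - \tfrac{\ell}{2}\ln(1-\delta)
    \ge \tfrac{\delta}{2}\ell - \ln 2,
\end{align*}
which combined with the previous step gives $S(\rho^\ell_L)+S(\rho^\ell_R)-S(\rho^{2\ell})\ge \tfrac{\delta}{2}\ell - 1$, as required. The only place requiring mild care is the bookkeeping of additive constants when going from $D_{\mathrm{cl}}$ to $\ln(1/\eta)$; this is routine and does not hide any real difficulty, since the distinguishing probability is already separated from $1/2$ for any $\ell$ large enough that the statement is nontrivial. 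No step is really an obstacle: the heart of the argument is the one-line identification of the entropy combination with a relative entropy, and monotonicity does the rest.
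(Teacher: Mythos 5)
Your overall route is the same as the paper's: identify $S(\rho^\ell_L)+S(\rho^\ell_R)-S(\rho^{2\ell})$ with the relative entropy $S(\rho^{2\ell}\,\|\,\rho^\ell_L\otimes\rho^\ell_R)$ and apply monotonicity of relative entropy under the two-outcome measurement (the paper invokes Lindblad--Uhlmann for exactly this), reducing everything to a classical binary relative entropy. Where your argument does not close is in the reading of the hypothesis. You allow two-sided error, $\Tr(M\rho^{2\ell})\ge 1-\eta$ and $\Tr\big(M(\rho^\ell_L\otimes\rho^\ell_R)\big)\le\eta$ with $\eta=2(1-\delta)^{\ell/2}$. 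Under that reading the best classical bound available is
\begin{align*}
  D_{\mathrm{cl}}\big((1-\eta,\eta)\,\|\,(p,1-p)\big)\;\ge\;(1-2\eta)\ln\tfrac{1-\eta}{\eta}
\end{align*}
(the minimum over $p\le\eta$ is attained at $p=\eta$), and for constant $\eta$ this can be far below $\ln(1/\eta)-1$: at $\eta\approx0.4$ it is about $0.08$, while $\tfrac{\delta}{2}\ell-1$ can be about $0.6$ there. Concretely, your intermediate step $(1-\eta)\ln\tfrac{1-\eta}{\eta}-H(\eta)\ge\ln\tfrac1\eta-1$ (with $H$ the binary entropy in nats) is equivalent to $2H(\eta)\le 1$, which fails unless $\eta$ is already quite small; moreover ``dropping'' the non-positive term $\eta\ln\tfrac{\eta}{1-p}$ \emph{increases} the sum and so cannot be used for a lower bound as stated (you need $\eta\ln\tfrac{\eta}{1-p}\ge\eta\ln\eta$ instead). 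Finally, your last line stacks the $-\ln 2$ from $\ln\tfrac1\eta\ge\tfrac{\delta}{2}\ell-\ln 2$ on top of the $-1$ already incurred, yielding $\tfrac{\delta}{2}\ell-1-\ln 2$ rather than the claimed $\tfrac{\delta}{2}\ell-1$.

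The paper avoids all of these losses because the measurement supplied by \Lem{lem:measurement} is \emph{one-sided}: $\Pi$ projects onto the common ground space of all local terms inside the $2\ell$-window, so $\Tr(\Pi\rho^{2\ell})=1$ exactly, while $\Tr(\Pi\,\rho^\ell_L\otimes\rho^\ell_R)\le\alpha= 2(1-\delta)^{\ell/2}$. The induced classical relative entropy is then exactly $\ln(1/\alpha)\ge-\ln 2+\tfrac{\ell}{2}\ln\tfrac{1}{1-\delta}\ge\tfrac{\delta}{2}\ell-1$, with no binary-entropy corrections. So the missing ingredient is to use (or to read into the hypothesis) the fact that the distinguishing measurement accepts $\rho^{2\ell}$ with certainty; with the symmetric two-sided-error reading, data processing alone cannot recover the stated constant.
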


The lemma implies that the entropy in $S(\rho^\ell_L) +
S(\rho^\ell_R)$ is significantly larger than $S(\rho^{2\ell})$,
implying that disentangling along the cut has introduced a lot of 
new entropy. The proof is simple, based on relative entropy; 
essentially, all it uses is the fact that a measurement that
distinguishes with high probability two states, implies high
relative entropy between the results of the measurements.  Once
again, details can be found in the appendix. 

To finish the proof of \Lem{lem:overlap}, we now need to derive a
contradiction.  Denote by $S(2\ell)$ the value of $S(\rho^{2\ell})$,
with $2\ell$ being the segment centered around the cut that provides
our contradictory assumption (namely, that any tensor product state
has less than $(1-\delta)^{\ell/4}$ inner product with the ground
state). Under these conditions, Lemma \ref{lem:measurement} applies,
and hence also the conditions of Lemma \ref{lem:A} apply to this
segment.  Applying \Lem{lem:A} we conclude that $S(2\ell) \leq 2
S(\ell)-\frac{\delta}{2}\ell+1$.  We now want to recursively apply
this inequality, for the $\ell$ long segments on both sides of the
cut, and then for the $\ell/2$-long segments within those segments,
and so on.  The problem is that the cuts now move to different
locations within the $2\ell$ long window, and so our assumption no
longer applies for these cuts. However, if the inner product state
with any tensor product state is small along the original cut, it
can be shown to be quite small also along near-by cuts, and so all
the above arguments can be applied for those cuts too. This can be
formalized in the following claim, whose easy proof can once again
be found in the appendix: 

\begin{claim}  
\label{cl:cut} 
  If $|\braket{\phi_1 \otimes \phi_2}{\Omega}| \leq \mu$ for all
  product states across cut $(k,k+1)$ then
  $|\braket{\chi_1\otimes\chi_2}{\Omega}| \leq \mu d^\ell$ for all
  product states across any cut $(k+j, k+j+1)$ with $-\ell \leq j
  \leq \ell$.
\end{claim}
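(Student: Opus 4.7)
The plan is to reduce a product state across any nearby cut to a short superposition of product states across the original cut $(k,k+1)$, and then apply the hypothesis term-by-term. The argument is purely linear-algebraic and does not use the Hamiltonian, the detectability lemma, or the gap.

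By symmetry I would treat only $0 \le j \le \ell$; the case $-\ell \le j < 0$ is identical. For $j=0$ there is nothing to prove, so fix $1 \le j \le \ell$ and let $\ket{\chi_1} \otimes \ket{\chi_2}$ be a unit product state across the cut $(k+j, k+j+1)$, with $\ket{\chi_1}$ supported on particles $1,\ldots,k+j$ and $\ket{\chi_2}$ supported on particles $k+j+1,\ldots,n$.

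The key step is to expand $\ket{\chi_1}$ along the original cut $(k,k+1)$ in its Schmidt decomposition,
\[
\ket{\chi_1} = \sum_s \alpha_s \ket{a_s} \otimes \ket{b_s},
\]
with $\ket{a_s}$ supported on particles $1,\ldots,k$, $\ket{b_s}$ supported on the ``middle strip'' $k+1,\ldots,k+j$, $\alpha_s \ge 0$, and $\sum_s \alpha_s^2 = 1$. Since the middle strip has dimension $d^j$, the sum contains at most $d^j$ terms. For each $s$, the vector $\ket{a_s} \otimes (\ket{b_s} \otimes \ket{\chi_2})$ is a unit product state across the cut $(k,k+1)$, so by hypothesis its overlap with $\ket{\Omega}$ is at most $\mu$ in absolute value. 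Summing, applying the triangle inequality, and invoking Cauchy--Schwarz on $\sum_s \alpha_s \cdot 1$,
\[
|\braket{\chi_1 \otimes \chi_2}{\Omega}| \le \mu \sum_s \alpha_s \le \mu \sqrt{d^j} \le \mu d^{\ell/2} \le \mu d^\ell.
\]

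There is no real obstacle: the whole argument is a tensor-factor reshuffle, and the only design choice is the basis used on the middle strip. The Schmidt decomposition is the most economical and already yields the slightly stronger bound $\mu d^{\ell/2}$, which the claim relaxes to $\mu d^\ell$; a tensor-product basis of the middle strip would give the same loose bound even more directly.
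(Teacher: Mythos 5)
Your argument is correct and follows essentially the same route as the paper's own proof, which also Schmidt-decomposes $\ket{\chi_1}$ across the original cut $(k,k+1)$ and applies the hypothesis term-by-term (phrased there as a contradiction: if every term $\ket{L_i}\otimes\ket{R_i}\otimes\ket{\chi_2}$ had overlap at most $\mu$, the total could not exceed $\mu d^\ell$). Your Cauchy--Schwarz step even tightens the bound to $\mu d^{\ell/2}$, a minor sharpening the claim does not need.
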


%The proof is a few simple lines: 
 
We therefore assume by contradiction that the inner product along a
given cut is smaller than $\mu$, such that $\mu d^\ell=
(1-\delta)^{\ell/4}$, and so along all the cuts in the $2\ell$
window we have that the inner product of the states is at most
$(1-\delta)^{\ell/4}$, and hence our assumptions apply.  We can
therefore use the same argument recursively.  Since $S(1)\leq
\ln(d)$, we get (for $\ell$ a power of 2), $S(\ell) \leq \ell \ln(d)
- \frac{\delta}{2} \ell \log_2\ell +\ell \leq \ell (\ln(d)+1) -
\frac{\delta}{2}\ell \log_2\ell $.  Choosing $\ell$ such that
$\frac{\delta}{2}\log_2\ell \geq \ln d+1$ makes $S(\ell)<0$ thus
giving a contradiction. Using the fact that $d\ge 2$, this can be achieved by
$\ell=d^{4/\delta}$ since $d^{4/\delta} \ge 2^{\frac{2(\ln d + 1)}{\delta}}$. 

%%%%%%%%%%%%%%%%%%%%%%%%%%%%%%%%%%%%%%%%%%%%%%%%%%%%%%%%%%%%%%%%%5
\section{Exponential decay of correlations} \label{sec:exp}

Consider now a Hamiltonian $H=\sum_i Q_i$ which is $k$-local and set
on a $d$-dimensional grid. Once again, we assume that $Q_i$ are
projections, and that $H$ is frustration free with a unique ground
state $\ket{\Omega}$ (i.e., $Q_i\ket{\Omega}=0$), and a spectral gap
$\epsilon>0$. 
We wish to show

\begin{theorem} {\bf Decay of Correlations in ground states of gapped 
Hamiltonians on a $d$-Dim grid:}

Consider a setting as described above.  Let $X,Y$ be two local
observables whose distance on the grid from each other is $m$.
Denote $\bar{X}\EqDef\bra{\Omega}X\ket{\Omega}$,
$\bar{Y}\EqDef\bra{\Omega}Y\ket{\Omega}$.  Then

\begin{align}
 |\bra{\Omega} (X-\bar{X})(Y-\bar{Y})\ket{\Omega}|
   =  |\bra{\Omega} XY\ket{\Omega} - \bar{X}\bar{Y}| \le
   \norm{X}\cdot\norm{Y}\cdot e^{-\orderof{m}}  \ .
\end{align}
\end{theorem}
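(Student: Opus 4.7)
The plan is to mimic the DL-based reasoning sketched in \Sec{sec:LR}, approximating $\gs$ by $A^\ell$ and then exploiting the two-sided causality-cone structure around $X$ and $Y$. Since $\ket{\Omega}$ is the unique ground state, $\gs=\ketbra{\Omega}{\Omega}$ and the covariance rewrites as $\bra{\Omega}XY\ket{\Omega}-\bar X\bar Y=\bra{\Omega}X(\Id-\gs)Y\ket{\Omega}$. By \Lem{lem:gen-DL}, $\|A^\ell|_{\Hi'}\|\le\eta^\ell$ where $\eta\EqDef[\epsilon/f(k,g)+1]^{-1/3}<1$; together with $A^\ell\ket{\Omega}=\ket{\Omega}=\gs\ket{\Omega}$ and the fact that $A^\ell$ preserves $\Hi'$ (since $\bra{\Omega}A^\ell=\bra{\Omega}$), this gives $\|A^\ell-\gs\|\le\eta^\ell$. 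Replacing $\gs$ by $A^\ell$ costs an error of at most $\|X\|\|Y\|\eta^\ell$.

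Using $A^\ell\ket{\Omega}=\ket{\Omega}$ and $\bra{\Omega}A^\ell=\bra{\Omega}$, I would rewrite the approximated quantity as a commutator:
\begin{align*}
  \bra{\Omega}X(\Id-A^\ell)Y\ket{\Omega}
    =\bra{\Omega}(A^\ell X-XA^\ell)Y\ket{\Omega}
    =-\bra{\Omega}[X,A^\ell]Y\ket{\Omega} \ .
\end{align*}
I would then pick $\ell=\Theta(m)$ small enough that the causality cones of depth $\ell$ around $\mathrm{supp}(X)$ and $\mathrm{supp}(Y)$ are disjoint. The cone argument of \Sec{sec:LR} gives $A^\ell Y\ket{\Omega}=Y'_\ell\ket{\Omega}$ with $Y'_\ell$ supported in $Y$'s cone; since $X,Y$ commute (disjoint supports), $A^\ell XY\ket{\Omega}=A^\ell YX\ket{\Omega}=Y'_\ell X'_\ell\ket{\Omega}$ with $X'_\ell,Y'_\ell$ local in the two disjoint cones. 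Combined with $XY'_\ell=Y'_\ell X$, this yields $[X,A^\ell]Y\ket{\Omega}=Y'_\ell(X-X'_\ell)\ket{\Omega}$.

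The crucial final step exploits that $A^\ell$ nearly kills the excited part of $Y\ket{\Omega}$: writing $Y'_\ell=\bar Y\,\Id+U_\ell$ on $Y$'s cone, one has $U_\ell\ket{\Omega}=A^\ell(\Id-\gs)Y\ket{\Omega}$, hence $\|U_\ell\ket{\Omega}\|\le\eta^\ell\|Y\|$. Since $\bra{\Omega}X'_\ell\ket{\Omega}=\bra{\Omega}A^\ell X\ket{\Omega}=\bar X$, the $\bar Y\,\Id$ contribution drops out and we are left with $\bra{\Omega}U_\ell(X-X'_\ell)\ket{\Omega}$. Commuting $U_\ell$ past $X-X'_\ell$ (disjoint cones) and then applying Cauchy-Schwarz bounds this by $\|(X-X'_\ell)^\dagger\ket{\Omega}\|\cdot\|U_\ell\ket{\Omega}\|\le 2\|X\|\cdot\eta^\ell\|Y\|$, so altogether $|\bra{\Omega}XY\ket{\Omega}-\bar X\bar Y|\le O(\|X\|\|Y\|\eta^\ell)=\|X\|\|Y\|e^{-\orderof{m}}$.

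The main obstacle is the cone bookkeeping: one must justify that $A^\ell XY\ket{\Omega}$ factors as $Y'_\ell X'_\ell\ket{\Omega}$ with the two local operators living in disjoint cones. This requires projections in $A^\ell$ outside both cones to commute with both $X$ and $Y$ (so that they can be pushed through and absorbed by $\ket{\Omega}$), while projections in the $X$-cone must commute with those in the $Y$-cone so that they can be reordered. Both conditions hold once $\ell\le cm$ for a sufficiently small constant $c$ determined by the per-layer cone growth rate (a function of the locality $k$ and the number of layers $g$); this $c$, together with $\eta<1$, sets the implicit constant in the $e^{-\orderof{m}}$ decay.
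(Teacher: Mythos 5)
Your proof is correct, and it rests on the same two pillars as the paper's: the approximation $A^\ell=\gs+e^{-\orderof{\ell}}$ coming from the DL, and the causality-cone/absorption argument with $\ell=\Theta(m)$ chosen so that the cones of $X$ and $Y$ stay disjoint. The execution is genuinely different, though. The paper splits $A^\ell=P_{in}P_{out}$ into the projections inside $Y$'s cone and all the rest, observes that $P_{in}$ commutes with $X$, that $P_{out}$ commutes with $Y$, and that both fix $\ket{\Omega}$, and thereby gets the \emph{exact} identity $\bra{\Omega}XY\ket{\Omega}=\bra{\Omega}XP_{in}P_{out}Y\ket{\Omega}=\bra{\Omega}XA^\ell Y\ket{\Omega}$; the entire error analysis is then one application of $\norm{A^\ell-\gs}\le e^{-\orderof{\ell}}$. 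You instead transplant the Lieb--Robinson-style commutator argument: reduce to $\bra{\Omega}[X,A^\ell]Y\ket{\Omega}$, factor $A^\ell XY\ket{\Omega}=Y'_\ell X'_\ell\ket{\Omega}$ into operators supported on the two disjoint cones, split off the $\bar Y\,\Id$ part of $Y'_\ell$, and finish with Cauchy--Schwarz. Each step checks out: $A^\ell$ does preserve $\Hi'$ (since $\bra{\Omega}\Pi_i=\bra{\Omega}$ for every layer), so $\norm{A^\ell-\gs}\le\eta^\ell$; $\norm{U_\ell\ket{\Omega}}\le\eta^\ell\norm{Y}$ because $(\Id-\gs)Y\ket{\Omega}\in\Hi'$; and $\norm{(X-X'_\ell)^\dagger\ket{\Omega}}\le2\norm{X}$ because $X'_\ell=C_XX$ with $C_X$ a product of projections fixing $\ket{\Omega}$. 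The factorization you flag as the main obstacle holds for exactly the reasons you state. The comparison is instructive: the cone bookkeeping you need for the factorization is precisely the bookkeeping that already yields the paper's exact identity, so the commutator and Cauchy--Schwarz steps are avoidable; on the other hand, your route isolates the locality input into the single clean statement $A^\ell B\ket{\Omega}=B'_\ell\ket{\Omega}$ with $B'_\ell$ cone-supported, which mirrors how the LR-bound version of this proof is usually organized and makes the parallel to \Sec{sec:LR} explicit.
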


\begin{proof}

Let us now consider two operators: $P_{in}, P_{out}$: $P_{in}$ is
defined by applying the DL $\ell$ times to $Y$ and discarding all
projections outside the causality cone of $Y$.  $\ell$ is chosen such
that the resulting cone will not overlap with $X$ (see
\Fig{fig:in-and-out}). Therefore $\ell\propto m$, with the
proportionality constant that is a geometrical factor. $P_{out}$ is
the complement of $P_{in}$, i.e., it the layers that one get by
applying the DL $m$ times, but with a ``hole'' where the causality
cone of $Y$ is. Together, we have $P_{in}\cdot P_{out}= A^\ell$ -- See
\Fig{fig:in-and-out} for an illustration in 1D.

\begin{figure}
\begin{center}
 \includegraphics[scale=1]{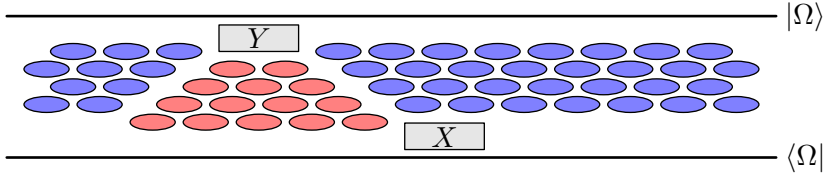}
 \caption{ An illustration of the statement $\bra{\Omega} X Y \ket{\Omega}
   = \bra{\Omega} X P_{out}P_{in} Y \ket{\Omega}$. The operator
   $P_{out}$ is drawn in blue color and $P_{in}$ is in red. Note
   that the number of projection layers is proportional to the
   distance between $X$ and $Y$.
 \label{fig:in-and-out}
 }
\end{center}
\end{figure}

$P_{in}, P_{out}$ leave the ground-state invariant. In addition, they
commute with $X$ and $Y$ respectively, hence
\begin{align*}
 \bra{\Omega}X = \bra{\Omega}XP_{in} \ , \qquad
 Y\ket{\Omega} = P_{out}Y\ket{\Omega} \ ,
\end{align*}
and therefore
\begin{align*}
 \bra{\Omega} XY \ket{\Omega}
   = \bra{\Omega} X P_{in}P_{out} Y\ket{\Omega}
   = \bra{\Omega} X A^\ell Y \ket{\Omega} \ .
\end{align*}
We now recall that $A^\ell$ is in fact an approximation of the ground
state projection $\gs$ (see \Eq{eq:central} in \Sec{sec:LR}),
\begin{align*}
 A^\ell = \gs  + e^{-\orderof{\ell}} = \gs  + e^{-\orderof{m}} \ ,
\end{align*}
and so
\begin{align*}
 \big|\bra{\Omega} XY \ket{\Omega}- \bra{\Omega} X \gs
 Y\ket{\Omega}\big| \le \norm{X}\cdot\norm{Y}\cdot e^{-\orderof{m}} \ .
\end{align*}
Assuming that the ground state is unique,
$\gs =\ketbra{\Omega}{\Omega}$, and therefore
\begin{align*}
 \big|\bra{\Omega} XY \ket{\Omega} - \bar{X}\bar{Y}\big|
  \le \norm{X}\cdot\norm{Y}\cdot e^{-\orderof{m}} \ .
\end{align*}

\end{proof}

%\iffalse
%%%%%%%%%%%%%%%%%%%%%%%%%%%%%%%%%%%%%%%%%%%%%%%%%%%%%%%%%%%%%%%%%%
\section{Acknowledgments}
\label{sec:Acknowledgements}

We are grateful to Matt Hastings, Tobias Osborne and Bruno
Nachtergaele for inspiring discussions about the above and related
topics.

Itai Arad acknowledges support by Julia Kempe's ERC Starting Grant
QUCO and Julia Kempe's Individual Research Grant of the Israeli
Science Foundation 759/07.

\vskip -.6cm
%%%%%%%%%%%%%%%%%%%%%%%%%%%%%%%%%%%%%%%%%%%%%%%%%%%%%%%%%%%%%%%%%%
%\bibliographystyle{alpha}
%\bibliographystyle{hep}
\bibliographystyle{ieeetr}

{~}

{
\small 
\bibliography{QC}

\begin{thebibliography}{10}

\bibitem{ref:Aha09b}
D.~Aharonov, I.~Arad, Z.~Landau, and U.~Vazirani, ``The detectability lemma and
  quantum gap amplification,'' in {\em STOC '09: Proceedings of the 41st annual
  ACM symposium on Theory of computing, \texttt{arXiv:0811.3412}}, (New York,
  NY, USA), pp.~417--426, ACM, 2009.

\bibitem{ref:Has07}
M.~Hastings, ``{An Area Law for One Dimensional Quantum Systems},'' {\em JSTAT,
  P}, vol.~8024, 2007.

\bibitem{ref:Ost95}
S.~\"Ostlund and S.~Rommer, ``Thermodynamic limit of density matrix
  renormalization,'' {\em Phys. Rev. Lett.}, vol.~75, pp.~3537--3540, Nov 1995.

\bibitem{ref:Ost97}
S.~Rommer and S.~\"Ostlund, ``Class of ansatz wave functions for
  one-dimensional spin systems and their relation to the density matrix
  renormalization group,'' {\em Phys. Rev. B}, vol.~55, pp.~2164--2181, Jan
  1997.

\bibitem{ref:Vid04a}
G.~Vidal, ``Efficient simulation of one-dimensional quantum many-body
  systems,'' {\em Phys. Rev. Lett.}, vol.~93, p.~040502, Jul 2004.

\bibitem{ref:Ver04b}
F.~Verstraete, D.~Porras, and J.~I. Cirac, ``Density matrix renormalization
  group and periodic boundary conditions: A quantum information perspective,''
  {\em Phys. Rev. Lett.}, vol.~93, p.~227205, Nov 2004.

\bibitem{ref:Ver04a}
F.~{Verstraete} and J.~I. {Cirac}, ``Renormalization algorithms for
  quantum-many body systems in two and higher dimensions,'' {\em
  arXiv:cond-mat/0407066}, July 2004.

\bibitem{ref:Shi06}
Y.~Shi, L.~Duan, and G.~Vidal, ``Classical simulation of quantum many-body
  systems with a tree tensor network,'' {\em Phys. Rev. A}, vol.~74, p.~022320,
  2006.

\bibitem{ref:Vid07a}
G.~Vidal, ``Entanglement renormalization,'' {\em Phys. Rev. Lett.}, vol.~99,
  p.~220405, 2007.

\bibitem{ref:Lie72}
E.~Lieb and D.~Robinson, ``{The finite group velocity of quantum spin
  systems},'' {\em Communications in Mathematical Physics}, vol.~28, no.~3,
  pp.~251--257, 1972.

\bibitem{ref:Has04}
M.~B. Hastings, ``Lieb-schultz-mattis in higher dimensions,'' {\em Phys. Rev.
  B}, vol.~69, p.~104431, Mar 2004.

\bibitem{ref:Din07}
I.~Dinur, ``The pcp theorem by gap amplification,'' {\em J. ACM}, vol.~54,
  no.~3, p.~12, 2007.

\bibitem{ref:Has07b}
M.~B. Hastings, ``Entropy and entanglement in quantum ground states,'' {\em
  Phys. Rev. B}, vol.~76, p.~035114, Jul 2007.

\bibitem{ref:Aff87}
I.~Affleck, T.~Kennedy, E.~Lieb, and H.~Tasaki, ``{Rigorous results on
  valence-bond ground states in antiferromagnets},'' {\em Physical review
  letters}, vol.~59, no.~7, pp.~799--802, 1987.

\bibitem{ref:Kit03}
A.~Y. Kitaev, ``Fault-tolerant quantum computation by anyons,'' {\em Annals of
  Physics}, vol.~303, no.~1, pp.~2 -- 30, 2003.

\bibitem{ref:Neu50}
J.~von Neumann, {\em The geometry of orthogonal spaces}, vol.~II.
\newblock 1950.

\bibitem{ref:Bad10}
C.~Badea, S.~Grivaux, and V.~Müller, ``A generalization of the friedrichs
  angle and the method of alternating projections,'' {\em Comptes Rendus
  Mathematique}, vol.~348, no.~1-2, pp.~53 -- 56, 2010.

\bibitem{ref:Ver09}
F.~Verstraete, M.~Wolf, and J.~Cirac, ``{Quantum computation and quantum-state
  engineering driven by dissipation},'' {\em Nature Physics}, vol.~5, no.~9,
  pp.~633--636, 2009.

\bibitem{ref:Far09}
E.~{Farhi}, D.~{Gosset}, A.~{Hassidim}, A.~{Lutomirski}, D.~{Nagaj}, and
  P.~{Shor}, ``{Quantum state restoration and single-copy tomography},'' {\em
  ArXiv e-prints}, Dec. 2009.

\bibitem{ref:Eck36}
C.~Eckart and G.~Young, ``The approximation of one matrix by another of lower
  rank,'' {\em Psychometrika}, vol.~1, pp.~211--218, 1936.

\bibitem{ref:Has10}
M.~B. {Hastings}, ``{Locality in Quantum Systems},'' {\em ArXiv:1008.5137},
  Aug. 2010.

\bibitem{ref:Lin75}
G.~Lindblad, ``Completely positive maps and entropy inequalities,'' {\em
  Communications in Mathematical Physics}, vol.~40, pp.~147--151, 1975.
\newblock 10.1007/BF01609396.

\bibitem{ref:Uhl77}
A.~Uhlmann, ``Relative entropy and the wigner-yanase-dyson-lieb concavity in an
  interpolation theory,'' {\em Communications in Mathematical Physics},
  vol.~54, pp.~21--32, 1977.
\newblock 10.1007/BF01609834.

\end{thebibliography}
}

%
%XXXXXXXXXXXXXXXXXXXXXXXXXXXXXXXXXXXXXXXXXXXXXXXXXXXXXXXXXXXXXXXXXXXXX
%XXXXXXXXXXXXXXXXXXXXXXXXXXXXXXXXXXXXXXXXXXXXXXXXXXXXXXXXXXXXXXXXXXXXX
%XXXXXXXXXXXXXXXXX     A P P E N D I C E S    XXXXXXXXXXXXXXXXXXXXXXXX
%XXXXXXXXXXXXXXXXXXXXXXXXXXXXXXXXXXXXXXXXXXXXXXXXXXXXXXXXXXXXXXXXXXXXX
%XXXXXXXXXXXXXXXXXXXXXXXXXXXXXXXXXXXXXXXXXXXXXXXXXXXXXXXXXXXXXXXXXXXXX
%

%+++++++++++++++++++++++++++++++++++++++++++++++++++++++++++++++++++++

\appendix

\section{Proof of Norm-Energy trade-off, \Lem{lem:normenergy}}

\begin{proof}[\ of \Lem{lem:normenergy}] \ \\

 Set $\ket{w} \EqDef Y\ket{v}/\norm{Yv}$. Then
 \begin{align*}
   \norm{(\Id-Y)XYv}^2 = \norm{Yv}^2\cdot\norm{(\Id-Y)Xw}^2 \ .
 \end{align*}
 By definition, $\ket{w}$ is a normalized vector inside the support
 of $Y$ and therefore for every vector $\ket{\psi}$, we have
 $\norm{(\Id-Y)\psi} \le \norm{(\Id-\ketbra{w}{w})\psi}$.
 Plugging this to the equality above, we find
 \begin{align*}
   \norm{(\Id-Y)XYv}^2 &\le \norm{Yv}^2\cdot\norm{(\Id-\ketbra{w}{w})Xw}^2
     = \norm{Yv}^2\cdot\bra{w}X(\Id-\ketbra{w}{w})X\ket{w} \\
     &= \norm{Yv}^2\cdot\norm{Xw}^2\cdot(1-\norm{Xw}^2) \\
     &=\norm{XYv}^2\cdot(1-\norm{Xw}^2) \\
     &\le \norm{XYv}^2\cdot(1-\norm{XYv}^2) =(1-\epsilon)\epsilon
     \ ,
 \end{align*}
 where the last inequality follows from the fact that $\norm{Xw}\ge
 \norm{XYv}$.
\end{proof}

%---------------------------------------------------------------------

\section{Upper bound on the Entropy}
\label{sec:S-upperbound}

In this section we prove \Lem{lem:steps}, which is  required to finish the 
proof of Lemma \ref{lem:overlaptoarealaw}. 

\begin{proof}\ \\

Call the set of weights $\{ \lambda_j\}$ for $D^\ell+1 \le j\le
D^{\ell+1}$ the $\ell$'th block. Then 
the constraints in \Eq{eq:schmidtub} imply that for every
block $\ell\ge 1$, 
\begin{align}
  \sum_{j=D^\ell+1}^{D^{\ell+1}} \lambda_j 
    \le K\theta^\ell \ .
\end{align}
Obviously, by reshuffling the mass within a block we maintain the
constraints. Moreover, it is straight forward to see that the
entropy contribution of every block is maximized when all the
weights in it are equal. The maximal distribution is therefore a
steps function, which satisfies:
\begin{align}
  \text{in block $\ell$}, \quad \lambda_j \le
  \frac{K\theta^\ell}{D^{\ell+1}-D^\ell} 
  = \frac{K}{D-1}(\theta/D)^\ell \ .
\end{align}
We now define $\ell_0$ to be the first block for which
$K\theta^\ell \le \frac{1}{e}(1-\theta)\theta$:
\begin{align}
\label{eq:equiv}
   \frac{\ln \frac{eK}{\theta(1-\theta)}}{\ln(1/\theta)} &\le \ell_0
     \le \frac{\ln \frac{eK}{\theta(1-\theta)}}{\ln(1/\theta)} + 1
     = \frac{\ln \frac{K}{1-\theta} + 1}{\ln(1/\theta)} + 2 \ .
\end{align}

We will bound the maximal entropy by bounding the entropy
contribution of blocks up to (and including) $\ell_0-1$ and blocks
from $\ell_0$ onwards. The first is easy, as there are
$D^{\ell_0}$ weights in the low blocks:
\begin{align}
  S_I \le  \ell_0\ln D \ .
\end{align}

In the high blocks, $\lambda_j \le \frac{1}{e}(1-\theta) \le 1/e$, so we
can use the monotonicity of the function $-\lambda\ln\lambda$ in the
$(0:1/e]$ range to bound the entropy by
\begin{align}
  S_{II} &\le -\sum_{\ell\ge\ell_0}  
    K\theta^\ell \ln [\frac{K}{D-1}(\theta/D)^\ell] 
    \le \sum_{\ell\ge\ell_0}  
    K\theta^\ell \ln (D^{\ell+1}) \\
   &=\frac{K\theta^{\ell_0}\ln D}{1-\theta}
     \left(\ell_0 + \frac{1}{1-\theta}\right) \\
   &\le \ln D \left(\ell_0 + \frac{1}{1-\theta}\right) \ ,
\end{align}
where the second equality follows from standard geometric sums
identities, and the last inequality follows from the definition of
$\ell_0$. Next, looking at the lower bound of $\ell_0$ in
\Eq{eq:equiv}, it takes standard calculus to verify that $\ell_0 \ge
\frac{1 + \ln\frac{1}{1-\theta}}{\ln (1/\theta)} + 1\ge
\frac{1}{1-\theta}$, and so $S_{II}
\le 2\ell_0\ln D$, and 
\begin{align*}
  S = S_I + S_{II} \le 3\ell_0\ln D \ .
\end{align*}
Plugging the upper bound of $\ell_0$ from \Eq{eq:equiv}, we
get \Eq{eq:S-bound}.
\end{proof}

%---------------------------------------------------------------------

\section{Proof of Existence of Distinguishing Measurement, Lemma \ref{lem:measurement}}
\begin{proof}[\ of lemma \ref{lem:measurement}] \ \\

Let $\mathcal{Q} = \{ Q_i : Q_i \text{ acts only on particles in the
$2\ell$ interval}\}$.  Let $\Pi$ be a projection onto the
ground space of all the operators in $\mathcal{Q}$.  We will show
that $\{ \Pi, 1 - \Pi\}$ is the desired distinguishing measurement.

Clearly $\Tr( \Pi \rho^{2\ell}) =1$.  We would now like to prove
that $\Tr(\Pi \rho^\ell_L\otimes \rho^\ell_R)$ is at most
$2(1-\delta)^{\ell/2}$.

We start by considering, instead of $\Pi$, the applications of the
DL operator $A$.

We can write $\rho _L \otimes \rho_R$ as a convex combination of
rank 1 density matrices of product states of the form
$\ket{\phi_1}\otimes\ket{\phi_2}$.  By assumption, the overlap with
the ground state is $|\braket{\phi_1\otimes\phi_2}{\Omega}|\leq
(1-\delta)^{\ell/4}$. Therefore, since we assume a unique ground
state, $\ket{\phi_1\otimes\phi_2} = c\ket{\Omega} +
(1-c^2)^{1/2}\ket{\Omega_\perp}$, with $c\le (1-\delta)^{\ell/4}$
and $\ket{\Omega_\perp}$ perpendicular to the ground space. Then by
the DL, applying $A$ for $\ell/2$ times, we get
$\Tr(A^{\ell/2}\ketbra{\phi_1}{\phi_1}
\otimes\ketbra{\phi_2}{\phi_2})\leq [(1-\delta)^{\ell/4}]^2 +
(1-\delta)^{\ell/2} = 2(1-\delta)^{\ell/2}$, and this remains true
when we take convex combinations:
\begin{align*}
  \Tr(A^{\ell/2}\rho_L\otimes\rho_R)\le  2(1-\delta)^{\ell/2} \ .
\end{align*}

Thus $\Tr(A^{\ell/2}\rho)=1$, and $\Tr(A^{\ell/2}
\rho_L\otimes\rho_R)\le 2(1-\delta)^{\ell/2}$; this establishes that
applying $A$ for $\ell/2$ times distinguishes between $\rho$ and
$\rho_L\otimes\rho_R$ with the desired probability.  However, we
would like the measurement to be confined to a short interval. 
Since we know that $\Tr(\Pi\rho^{2\ell})=\Tr(\Pi\rho)=1$, the proof
will follow from showing that 
\begin{align*}
  \Tr(\Pi\rho_L^\ell \otimes\rho_R^\ell)
    = \Tr(\Pi\rho_L\otimes\rho_R) 
    = \Tr\big(\Pi A^{\ell/2} (\rho_L\otimes\rho_R)\big)
    \le \Tr\big(A^{\ell/2} (\rho_L\otimes\rho_R)\big) \ .
\end{align*}
The first equality holds since $\Pi$ only acts on the $2 \ell$
particles in $\rho _L ^{\ell} \otimes \rho_R^{\ell}$ and the last
inequality is trivial; it is the middle equality which uses the
structure of $A^{\ell/2}$. Indeed, let us write $A^{\ell/2}=A_M A_L
A_R$ where $A_M=\underbrace{\cdots(P_{k-2} P_k P_{k+2}) (P_{k-1}
P_{k+1}) (P_k)}_\text{$\ell$ groups}$ is a ``pyramid'' of terms
centered at the cut $k$, and $A_L$ and $A_R$ are the terms to the
left and right of the pyramid respectively, as in \Fig{fig:pyr}.
Then $\Pi A^{\ell/2}(\rho_L\otimes\rho_R)=\Pi A_M A_L
A_R(\rho_L\otimes\rho_R)$. But since we applied $A$ for exactly
$\ell/2$ times, every $P_i$ projection in $A_M$ is also in the
$2\ell$ window of $\Pi$. Therefore $A_M P_i=A_M$ and consequently
$\Pi A_M=\Pi$. 
Similarly, $A_L A_R(\rho_L\otimes\rho_R)=\rho_L\otimes\rho_R$, and
therefore $\Pi A^{\ell/2}(\rho_L\otimes\rho_R)=\Pi(\rho_L\otimes\rho_R)$, implying
the desired equality.

\begin{figure}
  \begin{center}
    \includegraphics[scale=1]{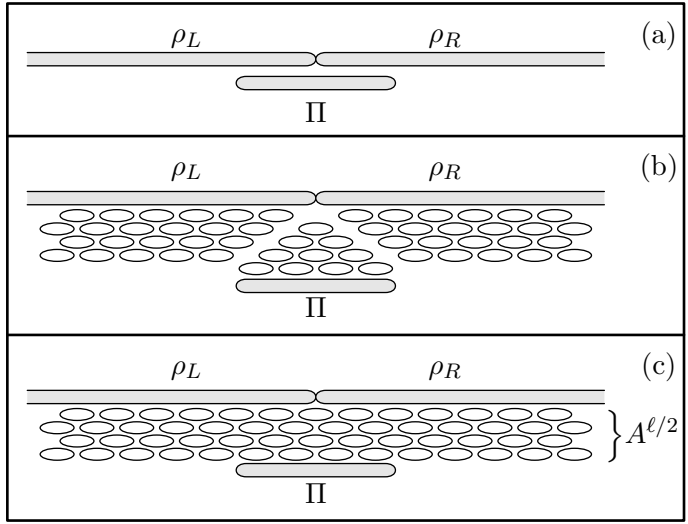}        
  \end{center}
  \caption{ \label{fig:pyr} An illustration of the identity $\Pi
  A^{\ell/2}(\rho_L\otimes\rho_R)=\Pi(\rho_L\otimes\rho_R)$. The
  left, right and middle sets of projections in Fig.~(b) are inside
  the invariant space of $\rho_L, \rho_R$ and $\Pi$ respectively.}
\end{figure}

\end{proof}

%----------------------------------------------------------------------------

\section{Proof of Information Theoretical bound, Lemma \ref{lem:A}}

\begin{proof}[\ of \Lem{lem:A}] \ \\
  Let $X$ and $Y$ be $\{0,1\}$ random variables that result from
  applying the measurement $\Pi$ on $\rho$ and $\rho_L\otimes\rho_R$
  respectively.  Then by the Lindblad-Uhlmann theorem
  \cite{ref:Lin75,ref:Uhl77},
  \begin{align*}
    S(\rho_L^\ell) + S(\rho_R^\ell) - S(\rho^{2\ell}) 
      = S(\rho^{2\ell}||\rho_L^{\ell}\otimes\rho_R^{\ell}) 
      \geq S(X||Y) 
      = \sum_{i\in \{0,1\}} x_i \ln \frac{x_i}{y_i} \ .   
\end{align*}
  In this case, we have $X=1$ with probability $1$ and $Y$ is 1 with
  probability $\alpha \leq 2(1-\delta)^{\ell/2}$.  Thus using
  straight forward analysis, $\sum_i x_i
  \ln \frac{x_i}{y_i}= \ln(\frac{1}{\alpha}) \ge \ln(1/2) -
  \frac{\ell}{2}\ln(1-\delta)\ge \frac{\delta}{2}\ell-1$, and the result follows.
\end{proof} 

\section{Proof that close cuts behave similarly: Claim~\ref{cl:cut}}
%---------------------------------------------------------------

\begin{proof}[\ of Claim \ref{cl:cut}] \ \\
  Assume for contradiction that $\ket{\chi_1}\otimes\ket{\chi_2}$ is
  a product state across the cut $(k+j, k+j +1)$ with $j>0$ such
  that $|\braket{\chi_1\otimes\chi_2}{\Omega}| > \mu d^\ell$. 
  Schmidt-decompose $\ket{\chi_1} =\sum_{i=1}^{d^\ell} \alpha_i
  \ket{L_i} \otimes\ket{R_i}$ where the cut is between the
  first $k$ particles and the $j$ particles between $k+1$ and $k+j$.
  By simple algebra, there exists at least one $i$ such that
  $|\braket{L_i\otimes R_i\otimes\chi_2}{\Omega}|>\mu$ which
  violates the hypothesis.
\end{proof}

\end{document}